\documentclass[12pt]{iopart}
\usepackage[utf8]{inputenc}
\usepackage[T1]{fontenc}
\usepackage{eucal}
\usepackage{microtype}
\usepackage{csquotes}
\usepackage{graphicx} 
\expandafter\let\csname equation*\endcsname\relax
\expandafter\let\csname endequation*\endcsname\relax
\usepackage{amsthm}
\usepackage{mathtools}
\usepackage{amssymb,amsfonts,stmaryrd}
\usepackage{mathrsfs}  
\usepackage{tikz-cd}
\usepackage[normalem]{ulem}

\mathtoolsset{showonlyrefs}

\newcommand*{\RR}{\mathbb{R}}

\newcommand*{\dd}{\mathrm{d}}

\newcommand*{\contr}[1]{\iota_{#1}}
\newcommand*{\liedv}[1]{\mathcal{L}_{#1}}
\newcommand*{\Reeb}{\mathcal{R}}
\DeclareMathOperator{\pr}{pr}

\let\Im\Image
\DeclareMathOperator{\spn}{span}

\DeclareMathOperator{\grad}{grad}

\newcommand{\sode}{\Gamma}

\newcommand{\prtr}{\pr_{TQ\times \RR}}
\newcommand{\qtilde}{\tilde{\mathcal {Q}}}
\newcommand{\ptilde}{\tilde{\mathcal {P}}}

\let\hat\widehat

\theoremstyle{plain}
\newtheorem{theorem}{Theorem}
\newtheorem*{theorem*}{Theorem}

\newtheorem*{lemma*}{Lemma}
\newtheorem{proposition}[theorem]{Proposition}
\newtheorem*{proposition*}{Proposition}
  
\newtheorem*{corollary*}{Corollary}
\theoremstyle{definition}

\newtheorem*{definition*}{Definition}
\newtheorem{example}{Example}
\newtheorem*{example*}{Example}
\theoremstyle{remark}

\newtheorem*{remark*}{Remark}

\newtheorem*{conjecture*}{Conjecture}

\newtheorem*{problem*}{Problem}


\usepackage[citestyle=numeric-comp,sorting=nyt,sortcites, backend=biber, giveninits=true, maxnames=99]{biblatex}
\usepackage[unicode=true, pdfusetitle, colorlinks=true, allcolors=blue, bookmarks=false]{hyperref}
\usepackage[colorinlistoftodos]{todonotes} 

\bibliography{biblio}

\AtEveryBibitem{
  \ifentrytype{online}{}{
    \ifentrytype{thesis}{}{
        \clearfield{url}
      }
  }
}

\DeclareSourcemap{
  \maps[datatype=bibtex]{
    \map[overwrite=true]{
      \step[fieldset=urldate, null]
      \step[fieldset=language, null]
      \step[fieldset=address, null]
     \step[fieldset=pagetotal, null]
    }
  }
}

\begin{document}

\title[Contact Lagrangian systems subject to impulsive constraints]{Contact Lagrangian systems subject to impulsive constraints}

\author{Leonardo Colombo$^{1}$, Manuel de Le\'on$^{2,3}$, Asier L\'opez-Gord\'on$^{2}$}

\date{\today}

\address{$^{1}$ Centro de Automática y Robótica (CSIC-UPM), 
Carretera de Campo Real, km 0, 200, 28500 Arganda del Rey, Spain.\\$^{2}$Instituto de Ciencias Matemáticas (CSIC-UAM-UC3M-UCM) 
Calle Nicolás Cabrera, 13-15, Campus Cantoblanco, UAM, 28049 Madrid, Spain.\\ $^{3}$Real Academia de Ciencias Exactas, Físicas y Naturales Calle Valverde, 22, 28004, Madrid, Spain.}
\ead{leonardo.colombo@car.upm-csic.es, mdeleon@icmat.es, asier.lopez@icmat.es}

\vspace{10pt}

\begin{abstract}
We describe geometrically contact Lagrangian systems under impulsive forces and constraints, as well as instantaneous nonholonomic constraints which are not uniform along the configuration space. In both situations, the vector field describing the dynamics of a contact Lagrangian system is determined by defining projectors to evaluate the constraints by using a Riemannian metric. In particular, we introduce the Herglotz equations for contact Lagrangian systems subject to instantaneous nonholonomic constraints. Moreover, we provide a Carnot-type theorem for contact Lagrangian systems subject to impulsive forces and constraints, which characterizes the changes of energy due to contact-type dissipation and impulsive forces. We illustrate the applicability of the method with practical examples, in particular, a rolling cylinder on a springily {plane} and a rolling sphere on a non-uniform {plane}, both with dissipation. 
\end{abstract}

%
%
%
%
%

\section{Introduction}
In recent years, there has been an increasing interest in the study of contact mechanical systems (see \cite{Rivas2022,deLeon2019g, deLeon2021h,Gaset2020,Lainz2022,deLeon2021l,Bravetti2017c,Bravetti2017d} and references therein). Contact geometry has been used in the last years to describe dissipative mechanical systems, as well as systems in thermodynamics \cite{Simoes2020a,Gay-Balmaz2018,Mrugala2000a,Eberard2007a}, quantum mechanics \cite{Ciaglia2018b}, control theory \cite{deLeon2020k}, dissipative field theories \cite{Gaset2020c,Gaset2021,Rivas2022}, etc. In the contact Lagrangian formalism, the equations of motion are obtained using the Herglotz variational principle instead of Hamilton's principle of least action, so that these dynamical systems do not enjoy conservative properties, but dissipative ones \cite{Herglotz1930b,deLeon2019f}. The main difference between both variational principles is that in the Herglotz variational principle the action is defined by a non-autonomous ODE instead of an integral. Contact Hamiltonian and Lagrangian systems with nonholonomic constraints were introduced by de Le\'on, Jim\'enez and Lainz in \cite{deLeon2021h}.

A geometrical formulation for mechanical systems with one-sided constraints was developed by Lacomba and Tulczyjew \cite{Lacomba1990}.
Ibort \textit{et al.} studied the geometrical aspects of Lagrangian systems subject to impulsive and one-sided constraints in a series of papers \cite{Ibort1998a,Ibort2001a,Ibort1997b}. This was extended to the Hamiltonian formalism by Cortés and Vinogradov \cite{Cortes2006a}. Additionally, Cortés \textit{et al.} studied Lagrangian systems subject to generalized non-holonomic constraints \cite{Cortes2001b,Cortes2001a,Cortes2006a}. The aim of this paper is to go one step further on contact Lagrangian systems with constraints and consider impulsive forces and constraints, as well as instantaneous nonholonomic constraints which are not uniform along the configuration space.  We develop a Carnot-type theorem for contact Lagrangian
systems subject to impulsive forces and constraints, which characterizes the changes
of energy due to contact-type dissipation and impulsive forces. In particular, the results by Ibort \textit{et al.} and Cortés \textit{et al.} are recovered from our formalism in the limit where there is no dissipation (i.e., the contact Lagrangian is an usual Lagrangian and the Herglotz equations yield the classical Euler-Lagrange equations).

The remainder of the paper is structured as follows. In Section \ref{section_review} we review constrained contact Lagrangian systems. Contact Lagrangian systems subject to impulsive forces are introduced in Section \ref{section_impulsive}. Contact Lagrangian systems with instantaneous nonholonomic constraints are introduced in Section \ref{section_instantaneous}. Finally, some conclusions and related topics for future research are given in Section \ref{section_conclusions}.

\section{Contact Lagrangian systems subject to constraints}
\label{section_review}
{
In this section, we recall the main properties of contact Hamiltonian and Lagrangian systems. See \cite{deLeon2019g,deLeon2019f,Rivas2022,deLeon2021h,Lainz2022,Gaset2020c,Bravetti2017c,Bravetti2017d} for more details. We also review contact Lagrangian systems subject to nonholonomic constraints as developed in \cite{deLeon2021h}.
}

A \emph{contact manifold} is a pair $(M, \eta)$, where $M$ is an $(2 n+1)$-dimensional differentiable manifold and $\eta$ is a $1$-form on $M$ called \emph{contact form} such that $\eta \wedge(\mathrm{d} \eta)^{n}$ is a volume form. Given a contact manifold $(M,\eta)$, there exists a unique vector field $\Reeb$ on $M$ such that $\contr{\Reeb} \dd \eta = 0$ and $\contr{\Reeb} \eta = 1$. $\Reeb$ is called \emph{Reeb vector field}. In \emph{Darboux} (local) \emph{coordinates} $(q^i, p_i, z)$, the contact form is written as $\eta = \dd z - p_i \dd q^i$, and the Reeb vector field as $\Reeb = \frac{\partial  } {\partial z}$.

The contact structure $\eta$ on $M$ defines the so-called \emph{musical isomorphisms} $\flat : TM \to T^*M$ as the map $v \mapsto \contr{v} \dd \eta + \eta(v) \eta$, and $\sharp = \flat^{-1}$. 
They induce the isomorphisms of $C^\infty(M)$-modules $\flat: \mathfrak{X}(M)\to \Omega^1(M)$ and $\sharp: \Omega^1(M) \to \mathfrak{X}(M)$.
Given a Hamiltonian function $H$ on $(M,\eta)$, we define the \emph{(contact) Hamiltonian vector field} $X_H$ by
\begin{equation}
  \flat(X_H) = \dd H  - \left(\Reeb (H) + H  \right) \eta.
  \label{Hamiltonian_vector_field}
\end{equation}
The triple $(M,\eta, H)$ is called a \emph{contact Hamiltonian system}.
Eq.~\eqref{Hamiltonian_vector_field} is equivalent to 

\begin{equation}
  \eta(X_H) = -H,\qquad \liedv{X_H} \eta = - \Reeb(H) \eta,
\label{Eqs_Hamiltonian_vector_field}
\end{equation}
where $\liedv{X_H} \eta$ is the Lie derivative of $\eta$ with respect to $X_H$.
Additionally, the following identities hold:
\begin{equation}
  X_H(H) = -\Reeb(H)\ H,\,\qquad\contr{X_H} \dd \eta = \dd H- \Reeb(H) \eta.
\end{equation} 
{
The equations with the Lie derivative mean that
\begin{equation}
  \frac{\mathrm{d} } {\mathrm{d}t} \left(H \circ \varphi_t \right) = -\left(\Reeb(H)\circ \varphi_t\right) \left( H\circ \varphi_t\right),
  \qquad \frac{\mathrm{d} } {\mathrm{d}t} \varphi_t^\ast \eta = -\left(\Reeb(H)\circ \varphi_t\right) \varphi_t^\ast \eta,
\end{equation}
where $\varphi_t$ is the flow of $X_H$.}
Integrating these equations yields
\begin{equation}
   H \circ \varphi_t = \exp \left(\int_0^t -\Reeb(H)(\varphi_\tau) \dd \tau   \right) H,\,\,\quad \varphi_t^* \eta = \exp \left(\int_0^t -\Reeb(H)(\varphi_\tau) \dd \tau   \right) \eta.
   \label{Eqs_Hamiltonian_flow}
\end{equation}

 In Darboux coordinates $(q^i,p_i,z)$ the flow of $X_H$ is given by $$\dot{q}^i =\frac{\partial H}{\partial p_i},\,\,\,\dot{p}_i=-\frac{\partial H}{\partial q^i}-p\frac{\partial H}{\partial z},\,\,\,\dot{z}=p
 _i\frac{\partial H}{\partial p_i}-H.$$

Next, consider a Lagrangian function $L:TQ\times \RR \to \RR$ and let us introduce the 1-form $\alpha_L=S^*(\dd L)$, where $S^*$ is the adjoint operator of the vertical endomorphism on $TQ$ extended in the natural way to $TQ\times \RR$. That is, if $(q^i)$ denote local coordinates on $Q$ and $(q^i, \dot q^i, z)$ the induced coordinates on $TQ\times \RR$, locally, $S = \dd q^i \otimes \frac{\partial  } {\partial  \dot{q}^i}$, and then $\alpha_L =S^*(\dd L)= \frac{\partial L} {\partial \dot q^i} \dd q^i$.

Let $\eta_L$ be the 1-form on $TQ\times \RR$ given by $\eta_L = \dd z - \alpha_L = \dd z - \frac{\partial L} {\partial \dot q^i} \dd q^i$, one can show that $\eta_L$ is a contact form if and only if $L$ is \emph{regular}, i.e., the Hessian matrix $(W_{ij}) = \left(\frac{\partial ^2 L} {\partial \dot q^i \dot q^j}  \right)$ is regular. Hereinafter, we shall assume that $L$ is regular.

The \emph{energy} of the system is given by $E_L = \Delta(L) - L$, where $\Delta = \dot q^i \partial / \partial \dot q^i$ is the Liouville vector field on $TQ$ trivially extended to $TQ\times \RR$. The triple $(TQ\times \RR, \eta_L, E_L)$ is called \emph{contact Lagrangian system}, and its corresponding Reeb vector field $\Reeb_L$ is locally given by 
\begin{equation}
  \Reeb_L = \frac{\partial  } {\partial z} - W^{ij} \frac{\partial ^2 L } {\partial \dot q^i \partial z} \frac{\partial  } {\partial \dot q^j}, 
\end{equation}
where $(W^{ij})=(W_{ij})^{-1}$. The dynamics of the contact Lagrangian system is given by the \emph{Lagrangian vector field} $\sode_L$, given by
\begin{equation}
  \flat_L (\sode_L) = \dd E_L - \left(E_L + \Reeb_L(E_L)  \right) \eta_L,
  \label{SODE_energy}
\end{equation}
where $\flat_L$ denotes the musical isomorphism defined by the contact form $\eta_L$. Eqs.~\eqref{Eqs_Hamiltonian_vector_field} can be now written as 
\begin{equation}
\eta_L(\sode_L) = -E_L,\quad\liedv{\sode_L} \eta_L = - \Reeb_L(E_L) \eta_L.
\label{Eqs_Lagrangian_SODE}
\end{equation}
Moreover, Eqs.~\eqref{Eqs_Hamiltonian_flow} are now written as
\begin{equation}
E_L \circ \tilde{\varphi}_t = \exp \left(\int_0^t -\Reeb_L(E_L)(\tilde{\varphi}_\tau) \dd \tau   \right) E_L,\quad
 \tilde{\varphi}_t^* \eta_L = \exp \left(\int_0^t -\Reeb_L(E_L)(\tilde{\varphi}_\tau) \dd \tau   \right) \eta_L, 
\label{Eq_Lagrangian_flow_energy}
\end{equation}
where $\tilde{\varphi}_t$ is the flow of $\sode_L$.

Let us recall that a vector field $\sode$ on $TQ\times \RR$ is called a \emph{second order differential equation} (\emph{SODE}) if $S(\sode) = \Delta$. Locally, a SODE is of the form
\begin{equation}
  \sode = \dot q^i \frac{\partial  } {\partial q^i} + \sode^i(q, \dot q, z) \frac{\partial  } {\partial \dot q^i} + \sode^z(q, \dot q, z) \frac{\partial } {\partial z},\,i=1,\ldots,n.
\end{equation}

A vector field $\sode$ on $TQ\times \RR$ is a SODE if and only if any integral curve of $\sode$ can be locally written as $(\sigma(t), \dot \sigma(t), z(t))$ for some local curve $\sigma$ on $Q$ and some local curve $z$ on $\RR$. This curve $\sigma$ is called a \emph{solution} of the SODE $\sode$. Since the Lagrangian $L$ is regular, one has the following equivalence between the \textit{Herglotz equations} and the geometric dynamical equations \eqref{SODE_energy} (see also ~\cite{deLeon2019f}).

\begin{proposition}
Let $L$ be a regular contact Lagrangian system on $TQ\times \RR$, and let $\sode_L$ be the Hamiltonian vector field associated with the energy given by Eq.~\eqref{SODE_energy}. Then $\sode_L$ is a SODE on $TQ\times \RR$. Moreover, the integral curves of $\sode_{L}$ are solutions of the \textit{Herglotz equations} \begin{equation}
    \frac{\partial L}{\partial q^{i}}-\frac{\mathrm{d}}{\mathrm{d} t} \frac{\partial L}{\partial \dot{q}^{i}}+\frac{\partial L}{\partial \dot{q}^{i}} \frac{\partial L}{\partial z}=0.
    \label{Herglotz_eqs}
\end{equation}
\end{proposition}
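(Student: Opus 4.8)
The plan is to run the computation in the natural coordinates $(q^i,\dot q^i,z)$, using that regularity of $L$ makes $\eta_L$ a contact form, so $\flat_L$ is an isomorphism and $\sode_L$ is the unique solution of \eqref{SODE_energy}. First I would contract \eqref{SODE_energy} with $\Reeb_L$; since $\contr{\Reeb_L}\dd\eta_L=0$ and $\eta_L(\Reeb_L)=1$, this recovers $\eta_L(\sode_L)=-E_L$, and substituting it back rewrites \eqref{SODE_energy} as the pair $\eta_L(\sode_L)=-E_L$ and $\contr{\sode_L}\dd\eta_L=\dd E_L-\Reeb_L(E_L)\,\eta_L$, which is what I would actually compute with. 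Two preliminary identities streamline everything: $\partial E_L/\partial\dot q^j = W_{ij}\dot q^i$, and, using the explicit Reeb field $\Reeb_L=\partial_z - W^{ij}(\partial^2 L/\partial\dot q^i\partial z)\partial_{\dot q^j}$, the relation $\Reeb_L(E_L)=-\partial L/\partial z$, which cleans up the right-hand side.

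Next I would prove that $\sode_L$ is a SODE. Writing $\dd\eta_L=-\dd(\partial L/\partial\dot q^i)\wedge\dd q^i$, the only term carrying a factor $\dd\dot q^j$ is the Hessian term $-W_{ij}\,\dd\dot q^j\wedge\dd q^i$. Setting $\sode_L=A^i\partial_{q^i}+B^i\partial_{\dot q^i}+C\,\partial_z$ and matching the $\dd\dot q^j$-coefficients on both sides of $\contr{\sode_L}\dd\eta_L=\dd E_L-\Reeb_L(E_L)\,\eta_L$ gives $W_{ij}A^i=W_{ij}\dot q^i$, since $\eta_L$ has no $\dd\dot q$ part and $\partial E_L/\partial\dot q^j=W_{ij}\dot q^i$. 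This is the main obstacle, and it is exactly where the hypothesis enters: \emph{regularity} of $L$, i.e. invertibility of $(W_{ij})$, forces $A^i=\dot q^i$, so $S(\sode_L)=\Delta$ and $\sode_L$ is a SODE. The scalar equation $\eta_L(\sode_L)=-E_L$ then collapses to $C=L$, so $\dot z=L$ along integral curves.

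Finally, with $A^i=\dot q^i$ and $C=L$ in hand, I would match the $\dd q^m$-coefficients. The mixed term $\dot q^i\,\partial^2 L/\partial q^m\partial\dot q^i$ appears identically in $\contr{\sode_L}\dd\eta_L$ and in $\dd E_L$ and cancels, leaving on the left $-\dot q^j\,\partial^2 L/\partial q^j\partial\dot q^m-W_{mj}B^j-(\partial^2 L/\partial z\partial\dot q^m)\,L$ and on the right $-\partial L/\partial q^m-(\partial L/\partial z)(\partial L/\partial\dot q^m)$. Along an integral curve one has $\ddot q^j=B^j$ and $\dot z=C=L$, so the three surviving left-hand terms together equal $-\frac{\mathrm d}{\mathrm dt}(\partial L/\partial\dot q^m)$, recognized by expanding the total derivative via the chain rule. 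Rearranging turns the matched identity into $\partial L/\partial q^m-\frac{\mathrm d}{\mathrm dt}(\partial L/\partial\dot q^m)+(\partial L/\partial\dot q^m)(\partial L/\partial z)=0$, which is \eqref{Herglotz_eqs}. Beyond the regularity step the argument is just bookkeeping of contractions; if desired I would note the coordinate-free alternative for the SODE property using $S^\ast\eta_L=0$, but the coordinate computation delivers the Herglotz equations in the same stroke.
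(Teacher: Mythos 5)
Your proof is correct, and it is essentially the standard argument: the paper itself does not prove this proposition but defers to the cited reference \cite{deLeon2019f}, where the result is established by exactly this kind of Darboux/natural-coordinate computation — splitting $\flat_L(\sode_L)=\dd E_L-(E_L+\Reeb_L(E_L))\eta_L$ into $\eta_L(\sode_L)=-E_L$ and $\contr{\sode_L}\dd\eta_L=\dd E_L-\Reeb_L(E_L)\eta_L$, using regularity of $(W_{ij})$ to force the SODE property, and then reading off the Herglotz equations from the $\dd q^i$-components. Your two auxiliary identities, $\partial E_L/\partial\dot q^j=W_{ij}\dot q^i$ and $\Reeb_L(E_L)=-\partial L/\partial z$, and the bookkeeping that follows are all accurate, so your write-up supplies in full the computation the paper leaves to the literature.
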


In the particular case the Lagrangian function is given by
\begin{equation}\label{natural}
  L = \frac{1}{2}g_{ij} \dot q^i \dot q^j - V(q,z),
\end{equation}
where $g$ is a (pseudo)Riemannian metric on $Q$, then the Herglotz equations yield
\begin{equation}\label{eqmetric}
{
  \ddot q^i + \Gamma^i_{\ jk} \dot q^j \dot q^k + g^{ij} \frac{\partial V} {\partial q^j} + \dot q^i \frac{\partial V} {\partial z} = 0,
  }
\end{equation}
where $\Gamma^i_{\ jk}$ are the Christoffel symbols of the Levi-Civita connection $\nabla$ determined by $g$ and $g^{ij}$ are the components of the inverse matrix associated with $g$. In other words, a curve $\sigma$ on $Q$ is a solution of $\sode_L$ if and only if
\begin{equation}
  \nabla_{\dot \sigma (t)} \dot \sigma(t) = -\grad V(\sigma (t)) - \frac{\partial V} {\partial z} (\sigma(t))\ \dot \sigma(t),
\end{equation}
where $\grad$ denotes the gradient with respect to $g$. Note that in the absence of potential, equation \eqref{eqmetric} are just the geodesic equations associated with the Levi-Civita connection.

Next, consider the contact Lagrangian system $L$ on $TQ\times \RR$ is restricted to certain (linear) constraints on the velocities modelled by a regular distribution $\mathcal{D}$ on the configuration manifold $Q$ of codimension $k$. Then, $\mathcal{D}$ may be locally described in terms of independent linear constraint functions $\left\{\Phi^{a}\right\}_{a=1, \ldots, k}$ by $\mathcal{D}=\left\{v \in T Q \mid \Phi^{a}(v)=0\right\}$, where $\Phi^{a}=\Phi_{i}^{a}(q) \dot{q}^{i}$. With a slight abuse of notation, we shall also denote by $\Phi^a$ the associated 1-forms on $Q$. 
{More generally, one could consider constraints $\Phi_a$ also depending on $z$. Nevertheless, as far as we know, there are no physical examples of constraints which depend on the variable $z$.}

{{
So, a curve $\sigma$ on $TQ\times \RR$ satisfies the Herglotz variational principle with constraints if and only if it satisfies the \emph{constrained Herglotz equations} \cite{deLeon2021h}, namely}}
{The dynamics $\sigma(t)$ of a contact Lagrangian system subject to nonholonomic constraints are given by the \emph{constrained Herglotz equations}:}
  \begin{equation} \label{Herglotz_eq_constrained}
  \frac{\mathrm{d}}{\mathrm{d} t} \frac{\partial L}{\partial \dot{q}^{i}}-\frac{\partial L}{\partial q^{i}}-\frac{\partial L}{\partial \dot{q}^{i}} \frac{\partial L}{\partial z}=\lambda_{a} \Phi_{i}^{a},
   \quad 
  \Phi^{a}(\dot{\sigma}(t))=0,\, a=1,\ldots,k,
\end{equation}
{for} some functions $\lambda_a$. These equations can be obtained variationally from the Herglotz principle with constraints (see \cite{deLeon2021f}).

Note that if the Lagrangian is given by \eqref{natural} then the constrained Herglotz equations are
\begin{equation}
{
  \ddot q^i + \Gamma^i_{\ jk} \dot q^j \dot q^k + g^{ij} \frac{\partial V} {\partial q^j} + \dot q^i \frac{\partial V} {\partial z} = g^{ij}\lambda_a \Phi_j^a.
  }
\end{equation}
 In other words, a curve $\sigma$ on $Q$ satisfies the Herglotz variational principle if and only if
\begin{subequations}
\begin{flalign}
  &  \nabla_{\dot \sigma (t)} \dot \sigma(t) = -\grad V(\sigma (t)) - \frac{\partial V} {\partial z} (\sigma(t))\ \dot \sigma(t)
  + \lambda (\sigma(t)),\\
  & \dot \sigma(t) \in \mathcal D_{\sigma(t)},
\end{flalign}
\end{subequations}
where $\lambda$ is a section of $\mathcal D^\perp$ along $\sigma$, and $\mathcal D^\perp$ denotes the orthogonal complement of $\mathcal D$ with respect to the metric $g$.


\section{Contact Lagrangian systems subject to impulsive forces and constraints}
\label{section_impulsive}

{
In this section, we extend the notion of impulsive constraints (see \cite{Brogliato1996,Cortes2006a,Ibort2001a,Ibort1997b,Ibort1998a,Lacomba1990,Rosenberg1977}) to contact Lagrangian systems.
}

Consider a system of $n$ particles in $\RR^3$ such that the $j$-th particle has mass $m_j$. Let us introduce the coordinates $(q^{3j-2}, q^{3j-1}, q^{3j})$ for the $j$-th particle. Suppose that $F_j = (F^{3j-2}, F^{3j-1}, F^{3j})$ is the net force acting on the particle $j$. The equations of motion of the $j$-th particle in the interval $[t_0, t_1]\subset \RR$ are then given by
\begin{equation}
   \dot q^{k}(t_1) = \frac{1}{m_j} \int_{t_0}^{t_1} F^k(\tau)\ \dd \tau + \dot q^k (t_0), 
 \end{equation} 
 where $3j-2\leq k \leq 3j$. This equation is a generalization of the classical Newton's second law, since it allows one to consider the case of finite jump discontinuities \cite{Rosenberg1977}. This is the case of impulsive forces, which produce a non-zero impulse at some time instant. More {precisely}, if $F$ is impulsive then
 \begin{equation}
    \lim_{t\to t_0^+} \int_{t_0}^t F(t)\ \dd \tau = P \neq 0,
  \end{equation} 
for some instant $t_0$. This implies that the impulsive force has an infinite magnitude at $t_0$, but $P$ is well-defined and bounded. This can be expressed as
\begin{equation}
  \lim_{t\to t_0^+} F(t) = P\ \delta (t_0),
\end{equation}
where $\delta$ denotes the Dirac delta. The impulsive forces may be caused by constraints, the so-called \emph{impulsive constraints}. Non-holonomic constraints of the form $\psi=0$ for $\psi = \psi_i(q) \dot q^i$ have an associated constraint force given by $F_k = \mu\ \psi_k$, where $\mu$ is a Lagrange multiplier. Thus the constraint is impulsive if
\begin{equation}
  \lim_{t\to t_0^+} \int_{t_0}^t \mu \ \psi_k\  \dd \tau = P_k \neq 0.
\end{equation}


The impulsive force may be due to a discontinuity at $t_0$ of $\psi_k$ of $\mu$ or of both. Let us assume that the constraints $\psi$ are smooth and hence the impulsive force is caused by a discontinuity of the Lagrange multiplier.

Let us now consider a contact Lagrangian system subject to a set of impulsive linear non-holonomic constraints $\left\{\psi^a  \right\}_{a=1}^r$.
From Eq.~\eqref{Herglotz_eq_constrained} we have that
\begin{equation}\label{constrained_Herglotz_form}
  \frac{\mathrm{d}p_i} {\mathrm{d}t} = \frac{\partial L}{\partial q^{i}} + p_i \frac{\partial L}{\partial z} + \lambda_{a} \psi_{i}^{a},
\end{equation}
where $p_i = \partial L/ \partial \dot q^i$.
Then
\begin{equation}
  \lim_{t\to t_0^+} \int_{t_0}^t \frac{\mathrm{d}p_i} {\mathrm{d}t}\ 
  \delta q^i \ \dd t
   = \lim_{t\to t_0^+} \int_{t_0}^t  \left[\left(\frac{\partial L}{\partial q^{i}} + p_i \frac{\partial L}{\partial z} + \lambda_{a} \psi_{i}^{a}\right) \delta q^i\right] \dd t,
\end{equation}
where $\delta q^i$ are virtual displacements satisfying the constraints, i.e., 
\begin{equation}
  \psi_i^a \ \delta q^i = 0.\label{impulsive_constraints}
\end{equation}

Since $\delta q^i$ does not depend on time and $\partial L/\partial q^i$, $p_i$ and $\partial L/\partial z$ are bounded, the first two terms of the integral on the right-hand side vanish. The third term vanishes as well by Eq.~\eqref{impulsive_constraints}. Therefore we have that $\left[p_i(t_0^+) - p_i (t_0)  \right] \delta q^i = 0$. In other words, the change of momentum $\Delta p_i$ satisfies 
\begin{equation}
  \Delta p_i\ \delta q^i = 0,
\end{equation}
where $\delta q^i$ is constrained by the condition \eqref{impulsive_constraints}, and thus
\begin{equation}
  \Delta p_i = \bar \mu_a \psi_i^a,
\end{equation}
where $\bar \mu_a$ are some Lagrange multipliers. 

If the Lagrangian is given by \eqref{natural} then
\begin{equation}
{
  \Delta \dot q^i \coloneqq \dot q^i (t_0^+) - \dot q^i (t_0^-) 
  = g^{ij} \bar \mu_a \psi_j^a. }
\end{equation}

\subsection{Holonomic one-sided constraints}

Consider a contact Lagrangian system $L$ on $TQ\times \RR$ subject to a holonomic one-sided constraint $\Psi(q)\geq 0$ (e.g., the collision with a fixed wall). This inequality determines a closed subset of $Q$, whose boundary $N$ is a $(n-1)$-dimensional submanifold of $Q$. Suppose that the Lagrangian function $L$ is given by \eqref{natural}. Then, the equations of motion are
\begin{equation}
\begin{array}{ll}
     \nabla_{\dot \sigma (t)} \dot \sigma(t) = -\grad V(\sigma (t)) - \dfrac{\partial V} {\partial z} (\sigma(t))\ \dot \sigma(t),
    & \text{if } \Psi(\sigma(t))>0,\\ \\
    \Delta \dot \sigma(t) = \dot \sigma(t^+) - \dot \sigma(t^-) 
    \in T_{\sigma(t)}^\perp N ,
     & \text{if } \Psi(\sigma(t))=0,
\end{array}
\end{equation}
where $T_{\sigma(t)}^\perp N$ is the orthogonal complement of $ T_{\sigma(t)} N$ with respect to the metric $g$. 
In other words, if $\sigma(t)=(q^i(t))$ {with $\psi(\sigma(t))=0$} we have
\begin{equation}
  \dot q^i (t^+) - \dot q^i (t^-) = \bar \mu g^{ij} \frac{\partial \Psi } {\partial q^j},
\end{equation}
that is,
\begin{equation}
  \Delta \dot \sigma(t) = \bar \mu \grad \Psi.
\end{equation}
Let us introduce the orthogonal projectors  $ \qtilde: TQ \to T^\perp N$, $\ptilde: TQ \to TN$.  

Locally,
\begin{equation}
  \qtilde (X) =  \frac{ g(\grad \Psi, X)}{g (\grad \Psi, \grad \Psi)} \grad \Psi \label{projector_Q_local}
\end{equation}
for any vector field $X$ on $Q$.

Suppose that the normal components of the velocities before and after the impact are related by $\dot \sigma(t^+)^\perp = -\alpha \dot \sigma(t^-)^\perp$, where $\alpha$ is the restitution coefficient. In other words, $\dd \Psi(\dot \sigma(t^+)) = - \alpha  \dd \Psi(\dot \sigma(t^-))$, or, equivalently, $\qtilde (\dot \sigma(t^+)) = - \alpha \qtilde (\dot \sigma(t^-))$. Note that for $\alpha=1$ we have elastic collisions, without energy loss, while for $0<\alpha<1$ we have plastic collisions which in general have energy loss. In the case $\alpha=0$ we have completely inelastic collisions.

On the other hand,
\begin{equation}
  \ptilde (\dot \sigma(t^+)) =  \ptilde (\dot \sigma(t^-)),
\end{equation}
and hence
\begin{equation}
  \dot \sigma(t^+) = \left(\ptilde - \alpha \qtilde  \right) (\dot \sigma(t^-)). \label{change_velocity_projector}
\end{equation}

 Let $V$ be a real vector space endowed with an inner product $\langle\cdot ,\cdot \rangle$ . Suppose that $\mathcal{A}$ and $\mathcal{B}$ are orthogonal linear endomorphisms of $V$, that is, $\langle\mathcal{A}(u), \mathcal{B}(v)\rangle=0$, for all $u, v \in V$, and that $(\mathcal{A}+\mathcal{B})(v)=v$, for any $v \in \Im \mathcal{A} \oplus \Im \mathcal{B}$. Consider the endomorphism $\mathcal{A}-\alpha \mathcal{B}$, where $\alpha \in[0,1]$. Then, we have \cite{Ibort2001a}
 \begin{equation}
   \langle(\mathcal{A}-\alpha \mathcal{B})(v),(\mathcal{A}-\alpha \mathcal{B})(v)\rangle-\langle v, v\rangle=-\frac{1-\alpha}{1+\alpha}\langle\mathcal{A}(v)-\alpha \mathcal{B}(v)-v, \mathcal{A}(v)-\alpha \mathcal{B}(v)-v\rangle,
 \end{equation}
 for any $v \in \Im \mathcal{A} \oplus \Im \mathcal{B}$.

{
\begin{theorem}[Standard Carnot's theorem]
Let $T(v)$ denote the kinetic energy for the velocity $v$, namely $T(v) = \frac{1}{2} g(v, v)$,
then 
$$T_+(t) - T_-(t) = - \frac{1-\alpha}{1+\alpha}T_l(t),$$
 where $T_-(t)=T(\dot \sigma(t^-)),\ T_+(t)=T(\dot \sigma(t^+))$ and $T_l(t)=T(\dot \sigma(t^+)-\dot \sigma(t^-))$.
\end{theorem}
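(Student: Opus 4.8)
The plan is to reduce the statement to the purely algebraic identity displayed immediately before the theorem (taken from \cite{Ibort2001a}), by making the correct identification of the endomorphisms $\mathcal{A}$, $\mathcal{B}$ and of the vector $v$. Working at the fixed impact point $\sigma(t)$, I would take $V = T_{\sigma(t)}Q$ endowed with the inner product $\langle\cdot,\cdot\rangle = g_{\sigma(t)}$, and set $\mathcal{A} = \ptilde$ and $\mathcal{B} = \qtilde$, the orthogonal projectors onto $T_{\sigma(t)}N$ and $T^\perp_{\sigma(t)}N$ respectively.

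First I would check that $\ptilde$ and $\qtilde$ satisfy the two abstract hypotheses of the lemma. Since $\Im \ptilde = T_{\sigma(t)}N$ and $\Im \qtilde = T^\perp_{\sigma(t)}N$ are $g$-orthogonal complementary subspaces, one has $g(\ptilde(u),\qtilde(w)) = 0$ for all $u,w$, so the two projectors are orthogonal in the required sense. Moreover $\ptilde + \qtilde = \Id$ on the direct sum $T_{\sigma(t)}N \oplus T^\perp_{\sigma(t)}N = T_{\sigma(t)}Q$, which gives $(\mathcal{A}+\mathcal{B})(v) = v$ for every $v \in \Im\mathcal{A}\oplus\Im\mathcal{B}$.

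Next I would choose $v = \dot\sigma(t^-)$ and invoke Eq.~\eqref{change_velocity_projector}, which yields $(\mathcal{A}-\alpha\mathcal{B})(v) = (\ptilde - \alpha\qtilde)(\dot\sigma(t^-)) = \dot\sigma(t^+)$. Substituting into the algebraic identity, the left-hand side becomes $g(\dot\sigma(t^+),\dot\sigma(t^+)) - g(\dot\sigma(t^-),\dot\sigma(t^-)) = 2T_+(t) - 2T_-(t)$, while on the right-hand side the vector $\mathcal{A}(v)-\alpha\mathcal{B}(v)-v$ equals $\dot\sigma(t^+) - \dot\sigma(t^-)$, so that its squared norm contributes $2T_l(t)$. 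The identity therefore reads $2T_+(t) - 2T_-(t) = -\frac{1-\alpha}{1+\alpha}\,2T_l(t)$, and dividing by $2$ gives the claim.

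Since every ingredient is supplied by the preceding lemma together with the kinematic relation \eqref{change_velocity_projector}, I expect no genuine obstacle. The only point deserving care is the verification that $\ptilde$ and $\qtilde$ satisfy exactly the two abstract hypotheses (orthogonality of images and summing to the identity on the direct sum); this holds precisely because $TN$ and $T^\perp N$ are $g$-orthogonal complements, so the bulk of the work is simply matching notation between the lemma and the impact relation.
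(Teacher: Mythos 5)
Your proposal is correct and follows exactly the route the paper intends: the algebraic identity displayed before the theorem (from Ibort \emph{et al.}) applied with $\mathcal{A} = \ptilde$, $\mathcal{B} = \qtilde$, $v = \dot\sigma(t^-)$, together with the impact relation \eqref{change_velocity_projector}; the paper itself leaves this specialization implicit and merely cites \cite{Ibort2001a}. Your verification of the two hypotheses (orthogonality of the images of the projectors and summing to the identity) and your bookkeeping of the factors of $2$ are both accurate, so the argument is complete.
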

See \cite{Ibort2001a} for the proof.
}

Suppose that $\Psi(\sigma(t_1))=0$ and $\Psi(\sigma(t))>0$ for $t\neq t_1$. Let $\sigma(t)= (q^i(t), \dot q^i(t), z(t))$.
Since the potential has no discontinuities, the instantaneous change of the energy is given by
\begin{equation}
  E_L\left(q(t^+), \dot q(t^+), z(t^+)\right) - E_L\left(q(t^-), \dot q(t^-), z(t^-)\right)
  = - \frac{1-\alpha}{1+\alpha}T_l, \label{jump_energy}
\end{equation}
so
\begin{equation}
  E_L\left(q(t^+), \dot q(t^+), t^+\right) 
  = V\left(q(t), \dot q(t), t\right) - \frac{1-\alpha}{1+\alpha}T_l.
\end{equation}
 Let the initial conditions be $q(t_0)=q_0$, $\dot q(t_0)=\dot q_0$ and $z(t_0)=z_0$.
 Let $q(t_1^-)=q_1$, $\dot q(t_1^-)=\dot q_1$ and $z(t_1^-)=z_1$.
  Then, from Eqs.~\eqref{Eq_Lagrangian_flow_energy} and \eqref{jump_energy}, we obtain the following result.

\begin{proposition}
Suppose that $\Psi(\sigma(t_1))=0$ and $\Psi(\sigma(t))>0$ for $t\neq t_1$. Then, the evolution of the energy is given by
\begin{equation}
  E_L (q(t), \dot q(t), z(t)) = \begin{cases} e^{\left(\int_{t_0}^t -\Reeb_L(E_L) (q(\tau), \dot q(\tau), z(\tau)) \dd \tau \right)}  E_L (q_0, \dot q_0, z_0),\qquad\qquad\qquad
   \text{if } t_0<t<t_1^-,\\
  e^{\left(\int_{t_1}^t -\Reeb_L(E_L) (q(\tau), \dot q(\tau), z(\tau)) \dd \tau \right)}\left[ V (q_1, \dot q_1, z_1)-\dfrac{1-\alpha}{1+\alpha}T_l(t_1) \right],\quad
   \text{if } t>t_1^+.
\end{cases}\nonumber
\end{equation}
\end{proposition}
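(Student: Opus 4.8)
The plan is to partition the trajectory into three regimes according to the sign of $\Psi$ along $\sigma$: the free motion on the interval $(t_0,t_1)$, the instantaneous impact at $t_1$, and the free motion on $(t_1,\infty)$. On each of the two free regimes the one-sided constraint is inactive, so the curve is an integral curve of the \emph{unconstrained} Lagrangian vector field $\sode_L$, and its energy obeys the exponential law \eqref{Eq_Lagrangian_flow_energy}; across $t_1$ the energy is governed instead by the Carnot-type jump \eqref{jump_energy}. The whole argument amounts to gluing these three pieces together.

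First I would treat $t_0<t<t_1^-$. Since $\Psi(\sigma(t))>0$ there, the motion coincides with the integral curve of $\sode_L$ issued from $(q_0,\dot q_0,z_0)$ at time $t_0$. Because $\sode_L$ is autonomous, its flow enjoys the group (cocycle) property, so the law \eqref{Eq_Lagrangian_flow_energy}---originally based at time $0$---can be re-based at $t_0$: writing the flow from $t_0$ to $t$ as $\tilde\varphi_{t-t_0}$ and substituting $\tau\mapsto\tau-t_0$ in the exponent converts $\int_0^{t-t_0}$ into $\int_{t_0}^{t}$, evaluated along $(q(\tau),\dot q(\tau),z(\tau))$. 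This yields precisely the first branch of the formula.

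Next I would handle the impact and the subsequent motion. At $t_1$ the configuration $q$ is continuous because the velocities remain finite, and the action variable $z$ is continuous because $\dot z=L$ is bounded; hence only $\dot q$ jumps, by \eqref{change_velocity_projector}, and the post-impact state is $(q_1,\dot q(t_1^+),z_1)$. Applying \eqref{jump_energy} gives
\[
E_L(t_1^+)=E_L(q_1,\dot q_1,z_1)-\frac{1-\alpha}{1+\alpha}\,T_l(t_1),
\]
which is exactly the bracketed quantity of the second branch (the $V$ appearing there is shorthand for the pre-impact energy $E_L(q_1,\dot q_1,z_1)$). For $t>t_1^+$ the constraint is again inactive, so the curve is once more an integral curve of $\sode_L$, now issued from the post-impact state at time $t_1$. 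Re-basing \eqref{Eq_Lagrangian_flow_energy} at $t_1$ exactly as above, with initial energy $E_L(t_1^+)$, produces the second branch.

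The only genuinely delicate step is the re-basing of \eqref{Eq_Lagrangian_flow_energy} to the base times $t_0$ and $t_1$; this rests entirely on the time-independence of $\sode_L$ and the resulting group property of its flow, and I would spell it out through the change of variable indicated above. Everything else is bookkeeping: reading off the active regime from the sign of $\Psi$, and verifying the continuity of $q$ and $z$ at $t_1$ so that the potential enters identically on both sides of the impact (recall $E_L=T+V$ for the natural Lagrangian \eqref{natural}) and the whole energy defect is carried by the kinetic term through \eqref{jump_energy}.
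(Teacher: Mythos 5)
Your proposal is correct and follows essentially the paper's own route: the proposition is obtained there directly by combining the dissipation law \eqref{Eq_Lagrangian_flow_energy}, re-based at the initial times $t_0$ and $t_1$ exactly as you describe, with the Carnot-type jump \eqref{jump_energy} at the impact. Note that your reading of $V(q_1,\dot q_1,z_1)$ as shorthand for the pre-impact energy $E_L(q_1,\dot q_1,z_1)$ is precisely what makes the second branch consistent with \eqref{jump_energy}; if $V$ were taken literally as the potential, the formula would omit the pre-impact kinetic energy, so flagging this interpretation is the right call.
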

\subsection{Application: Rolling cylinder on a spring {plane} with an external force}
\label{cylinder_Carnot}




Consider a cylinder constrained to be above on a {plane}, in a gravitational field. Assume that the system is externally influenced by a force that depends linearly on the velocity between the cylinder and the {plane (see \cite{Hagerty2001})}. Subsequently, the cylinder may spin and slide when in contact with the {plane}. For simplicity we assume the mass distribution of the cylinder is uniform along its height. For the motion of the {plane} we include a restoring force. We model the flexible {plane} as a large mass $M$, attached to a spring with spring constant $k$. We define the stance phase of the cylinder as the dynamics of the cylinder in contact with the {plane} and the aerial phase as the dynamics of the cylinder not in contact with the {plane}.

The configuration space for the free motion of the cylinder is $Q=\mathbb{R}^2\times\mathbb{S}^1\times\mathbb{R}$. Let $m$, $I$ and $r$ be the mass, the rotational inertia about the center of mass, and the radius of the cylinder, respectively. Denote by $\gamma$ the distance from the center of mass to the center of the cylinder. Let $(x,y)\in\mathbb{R}^2$ denote the horizontal and vertical position of the cylinder's center of mass, $\phi$ the angle through which the cylinder rotates about the center of mass and $h$ the vertical displacement of the rough {plane} from its equilibrium. For a positive constant $\beta$, the contact Lagrangian for the system $L:TQ\times\mathbb{R}\to\mathbb{R}$ is given by 

$$L(x,y,\phi,h,\dot{x},\dot{y},\dot{\phi},\dot{h},z)=\frac{1}{2}m(\dot{x}^2+\dot{y}^2)+\frac{1}{2}M\dot{h}^2+\frac{1}{2}I\dot{\phi}^2-\frac{1}{2}kh^2-mgy-Mgh+\beta z.$$ 

During the stance phase (rolling motion) the system is constrained as 
\begin{equation}\label{constraints}
    x=\gamma\sin\phi+r\phi+x_0,\,\, y=h+r+\gamma\cos\phi,
\end{equation}
together with the holonomic one-side physical constraint that the cylinder cannot pass through the springly {plane}, that is, $\Psi(q)=y-h-\gamma\cos\phi+r\sin\psi\geq 0$, $\forall\psi\in \mathbb S^1$ where $q=(x,y,\phi,h)\in Q$. In particular, no force is required to impose the constraint unless we have the equality. The equality is maintained for $\psi=\pi$, while the cylinder is constrained to roll. 

The rolling constraints \eqref{constraints} can be written as one forms whose vanishing realizes the constraint: $\omega^1=\dd x-(r+\gamma\cos\phi)\dd \phi$, $\omega^2=\dd y-\dd h+(\gamma\sin\phi)\dd \phi$. So, $\omega^1$ and $\omega^2$ define the distribution $\overline{\mathcal{D}}$ describing the rolling constraints as $\overline{\mathcal{D}}=\{v\in T_{q}Q|\omega^1(v)=0,\,\omega^2(v)=0\}$.
Since $\omega^1$ and $\omega^2$ come from differentiating the holonomic constraints \eqref{constraints}, $\dd \omega^1=0=\dd \omega^2$, they are integrable. Let us define the submanifold $N\subset Q$ given by the vanishing of the integrable constraints $\omega^1$ and $\omega^2$. In addition, the {plane} imposes positive forces on the cylinder to help to enforce the constraint $\Psi(q)=0$. The one-sided nature of the constraint is realized by two positive normal forces between the cylinder and the platform, $\mu_1(q,v)$ and $\mu_2(q,v)$, which will be specified later. Then we define the constraint distribution $\mathcal{D}$, which includes the restriction of the normal forces as $$\mathcal{D}=\{v\in T_{q}Q|\omega^1(v)=0,\,\omega^2(v)=0,\, \mu_1(q,v)>0,\,\mu_2(q,v)>0\}.$$

In this context, the stance phase occurs when the cylinder is in contact with the {plane} and there are positive normal forces between the {plane} and the cylinder, that is, $\Psi(q)=0$ and $v\in\mathcal{D}_q$. Similarly, the aerial phase occurs when the cylinder is above the platform, that is, $\Psi(q)>0$. During this phase the equations of motion for the cylinder are just $m\ddot{x}-\beta \dot x=0$, $m\ddot{y}- \beta \dot y=-mg$, $I\ddot{\phi}-\beta \dot \phi=0$, where no forces are required to satisfy the one-side holonomic constraint, and then the cylinder shows circular motion about the center of mass while falling in a gravitational field.

For the stance phase motion, we compute the equations of motion by using constrained Helgotz equations \eqref{Herglotz_eq_constrained}: \begin{align*}
m\ddot{x}-m\beta\dot{x}=&\mu_1,\\
m\ddot{y}+mg-m\beta\dot{y}=&\mu_2,\\
I\ddot{\phi}-I\beta\dot{\phi}=&-(r+\gamma\cos\phi)\mu_1-\mu_2\gamma\sin\phi,\\
m\ddot{h}+Mg+kh-M\beta\dot{h}=&-\mu_2,
\end{align*}together with the constraints $x=\gamma\sin\phi+r\phi+x_0$, $y=h+r+\gamma\cos\phi$, $y-\gamma\cos\phi+r\sin\psi\geq h$. By differentiating these last three constraints,  after some computations, and solving for the Lagrange multipliers (i.e., the normal forces between the {plane} and the cylinder) $\mu_1$ and $\mu_2$ as functions of $h,\phi,\dot{h},\dot{\phi}$, we obtain 

\begin{align*}
    \mu_1=&-\Upsilon(\phi)\left(2\beta\dot{\phi}(\gamma\cos\phi+r)
    \right.\\&\left.
    +\gamma\sin\phi\dot{\phi}^2\right)+\frac{\Upsilon(\phi)mM\gamma^2\sin\phi}{\Gamma(\phi)}\left(-2\beta\sin\phi\dot{\phi}+\frac{kh}{\gamma M}+\cos\phi\dot{\phi}^2\right),\\
    \mu_2=&-\frac{mMI}{\Gamma(\phi)}\left(-2\beta\gamma\sin\phi\dot{\phi}-\frac{kh}{M}-\gamma\cos\phi\dot{\phi}^2+I^{-1}(r+\gamma\cos\phi)\mu_1\right),
\end{align*} 
with 
$\displaystyle{\Upsilon(\phi)=\frac{mI\Gamma(\phi)}{m^2M\gamma\sin\phi(r+\gamma\cos\phi)+\Gamma(\phi)(I+m(r+\gamma\cos\phi)^2)}}$ and\\ $\Gamma(\phi)=I(M+m)-mM\gamma^2\sin^2\phi$.

From Eq.~\eqref{projector_Q_local}, we have that
\begin{equation}
  \tilde{\mathcal {Q}} (X) = 
  \frac{X^y-X^h+\gamma \sin \phi X^\phi}{\frac{1}{m}-\frac{1}{M}+\frac{1}{I} (\gamma \sin \phi)^2} \left(\frac{1}{m} \frac{\partial  } {\partial y} -\frac{1}{M} \frac{\partial  } {\partial h} + \frac{1}{I} \gamma \sin \phi \frac{\partial  } {\partial \phi}  \right),  
\end{equation}
where $X=X^x \frac{\partial  } {\partial x} + X^y \frac{\partial } {\partial y} + X^h \frac{\partial  } {\partial h} + X^\phi  \frac{\partial  } {\partial \phi}$,
so Eq.~\eqref{change_velocity_projector} yields
\begin{equation}
\begin{aligned}
  &\dot x^+ = \dot x^-,\\
  &\dot y^+ = \dot y^- - (1+\alpha)  \frac{1}{m} 
  \frac{\dot y^- - \dot h^- +\gamma \sin \phi_1 \dot \phi^-}{\frac{1}{m}-\frac{1}{M}+\frac{1}{I} (\gamma \sin \phi)^2},\\
  &\dot h^+ = \dot h^- + (1+\alpha)  \frac{1}{M} 
  \frac{\dot y^- - \dot h^- +\gamma \sin \phi_1 \dot \phi^-}{\frac{1}{m}-\frac{1}{M}+\frac{1}{I} (\gamma \sin \phi)^2},\\
  &\dot \phi^+ = \dot \phi^- - (1+\alpha)  \frac{\gamma \sin \phi}{I} 
  \frac{\dot y^- - \dot h^- +\gamma \sin \phi_1 \dot \phi^-}{\frac{1}{m}-\frac{1}{M}+\frac{1}{I} (\gamma \sin \phi)^2}.
\end{aligned}
\end{equation}
Here $\dot x^\pm = \dot x(t_i^\pm),\ \dot y^\pm = \dot y(t_i^\pm),\  \dot h^\pm = \dot h(t_i^\pm),\ \dot \phi^\pm = \dot 
\phi(t_i^\pm)$ and $\phi=\phi(t_i)$, where $t_i$ is the time instant at which the impact occurs. Let $E_0=E(q(0), \dot q(0), z(0))$ be the initial value of the energy. We have that
\begin{equation}
  E_L(q(t), \dot q(t), z(t)) = e^{\beta t} E_0, 
\end{equation}
for $t<t_i$, and
\begin{equation}
  E_L(q(t), \dot q(t), z(t)) = e^{\beta t} 
  \left[V(q_1, \dot q_1, z_1) - \frac{1-\alpha}{1+\alpha} T_l(t_1)  \right],
\end{equation}
for $t>t_i$, where
\begin{equation}
  V(q_1, \dot q_1, z_1) = \frac{1}{2}k h_1^2+mgy_1+Mgh_1-\beta z_1,
\end{equation}
and
\begin{equation}
\begin{aligned}
  T_l(t_1)&=\frac{1}{2}m \left[ \left(\dot x^+ - \dot x^-  \right)^2 + \left(\dot y^+ - \dot y^-  \right)^2  \right]+\frac{1}{2}M \left(\dot h^+ - \dot h^-  \right)^2+\frac{1}{2}I \left(\dot{\phi}^+ - \dot{\phi}^-  \right)^2\\
  &=\frac{1}{2} (1+\alpha)^2  
  \left[  \frac{\dot y^- - \dot h^- +\gamma \sin \phi_1 \dot \phi^-}{\frac{1}{m}-\frac{1}{M}+\frac{1}{I} (\gamma \sin \phi)^2}\right]^2
  \left(\frac{1}{m}+\frac{1}{M}+\frac{1}{I} (\gamma \sin \phi)^2  \right).
\end{aligned}
\end{equation}

\section{Contact Lagrangian systems with instantaneous nonholonomic constraints}
\label{section_instantaneous}
The concept of distribution can be generalized by not requiring its rank to be constant (see Refs.~\cite{Cortes2001b,Vaisman1994a}). This is useful to characterize geometrically systems subject to constraints which are ``degenerate'' at certain points.  


More specifically, by a \emph{generalized distribution} on $Q$ we mean a family of vector subspaces $\mathcal{D}=\left\{\mathcal{D}_{q}\right\}$ of the tangent spaces $T_{q} Q$. Such a distribution is called \emph{differentiable} ($\mathcal{C}^\infty$) if for every $q \in \mathrm{dom}\mathcal{D}$, there is a finite number of differentiable vector fields $X_1,\ldots, X_s\in \mathcal{D}$ such that $\mathcal{D}_q = \spn \left\{X_1|_q,\ldots, X_s|_q  \right\}$.

We define the \emph{rank} of $\mathcal D$ at $q\in Q$ as the function $\rho(q)=\dim \mathcal D_q$. Observe that, if $\mathcal{D}$ is differentiable $\rho(q)$ cannot decrease in a neighbourhood of $q$, and hence $\rho$ is a lower semi-continuous function. Clearly, $\mathcal{D}$ is a distribution in the usual sense if and only if $\rho(q)$ is a constant function. In the general case, $q\in Q$ will be called a \emph{regular point} if $\rho(q)$ takes a constant value on an open neighbourhood of $q$ (in other words, $q$ is a local maximum of $\rho(q)$), and a \emph{singular point} otherwise. Obviously, the set $\mathscr{R}$ of the regular points of $\mathcal{D}$ is open. Moreover, it is dense.
Indeed, if $q_0\in Q\backslash \mathscr{R}$ and $U$ is a neighbourhood of $q_0$, $\rho|_U$ must have a maximum (since it is integer-valued and bounded), and hence $U$ contains regular points, i.e., $q_0\in \overline{\mathscr{R}}$. However, $\mathscr R$ is not connected in general.

Similarly, a \emph{generalized codistribution} on $Q$ is a family of vector subspaces $\mathcal{S}=\left\{\mathcal{S}_{q}\right\}$ of the cotangent spaces $T_{q}^{*} Q$. Such a codistribution is called \emph{differentiable} if for every $q \in \mathrm{dom}\mathcal{S}$, there is a finite number of differentiable 1-forms $\omega_{1}, \ldots, \omega_{s}\in \mathcal S$ such that $\mathcal S_{q}=\spn\left\{\omega_{1}\left(q\right), \ldots, \omega_{s}\left(q\right)\right\}$. The regular and singular points are defined analogously to the ones in generalized distributions. Similarly, the set of regular points of $\mathcal S$ is open, dense and, generally, non-connected.

\begin{example}
Let $Q=\RR^2$ and $\mathcal{D}_{(x,y)}=\spn \left\{\partial/\partial x, \varphi(y) \partial/\partial y  \right\}$, where $\varphi(y)$ is a $\mathcal{C}^\infty$-function with $\varphi(y)=0$ for $y\leq 0$ and $\varphi(y)>0$ for $y>0$. For instance,
\begin{equation}
  \varphi(y)=
  \left\{
    \begin{array}{ll}
    0, &y\leq 0,\\
     e^{-1/y^2}, & y> 0.
    \end{array}
    \right.
\end{equation}
Then, the singular points are those of the $x$-axis, while the connected components are the half-{plane}s $y<0$ (where $\rho = 1$) and $y>0$ (where $\rho =2$).
\end{example}

Given a generalized distribution, $\mathcal D$, we define its annihilator, $\mathcal D^{\circ}$, as the generalized codistribution given by
\begin{equation}
\begin{aligned}
  \mathcal D^{\circ}: \operatorname{dom} \mathcal D \subset Q & \rightarrow T^* Q \\
  q & \mapsto \mathcal D_{q}^{\circ}=\left(\mathcal D _{q}\right)^{\circ}.
\end{aligned}
\end{equation}



Let $\mathcal D^\circ$ be a generalized differentiable codistribution on $Q$.
 The codistribution induces a decomposition of $Q$ into regular and singular points, namely, $Q= R\cup S$. Let us fix $R_c$, a connected component of $R$. Consider the restriction of the codistribution to $R_c$, $\mathcal D_c^\circ=\mathcal D_{\mid R_c}^\circ: R_c\subset Q\to T^*Q$. Clearly, $\mathcal D_{\mid R_c}^\circ$ is a regular codistribution. Let $\mathcal D_c:R_c\to TQ$ be the annihilator of $\mathcal D_c^\circ$. Now, we can consider a contact Lagrangian $L$ on $TQ\times \RR$ constrained to $\mathcal D_c$ and apply the theory for non-holonomic contact Lagrangian systems developed by de León, Jiménez and Lainz in \cite{deLeon2021h}. In this way, the problem can be solved in each connected component of $R$. However, if the motion reaches a singular point, the rank of $\mathcal D$ can vary suddenly, and the equations of motion can no longer be derived from the Herglotz principle with constraints. As a matter of fact, an impulsive force can emerge because of the change of rank of $\mathcal D$. 
{ The rank of $\mathcal D^\circ$, i.e., the corank of $\mathcal D$, is equal to the number of constraints.}
 Consider a trajectory of the system $q(t)$ which reaches a singular point at $t_0$, i.e., $q(t_0)\in S$, such that $q((t_0- \varepsilon, t_0))\subset R$ and $q((t_0, t_0+\varepsilon))\subset R$ for sufficiently small $\varepsilon>0$. Let 
 $\rho_\pm=\rho(q(t_0\pm\varepsilon))$ and $\rho_0=\rho(q(t_0))$.
 Recall that regular points are local maximums of $\rho$, and hence there are three possible cases that can occur at $t_0$:
 \begin{enumerate}
 \item $\rho_-=\rho_0<\rho_+$,
 \item $\rho_->\rho_0=\rho_+$,
 \item $\rho_->\rho_0$ and $\rho_+>\rho_0$.
 \end{enumerate}
In the first and third cases, the trajectory must satisfy, immediately after the point $q(t_0)$, {a greater number of constraints} which were not present before. This leads to an impulsive force which imposes the new constraints on the motion. Hereinafter, assume that we are in one of this two cases, i.e., that $\rho_0<\rho_+$.

Given the codistribution $\mathcal D^\circ $, we can introduce an associated distribution $\mathcal{D}^\ell$ on $TQ\times \RR$, whose annihilator is given by 
\begin{equation}
  \mathcal{D}^{\ell^\circ } = \left(\tau_Q\circ \prtr  \right)^* \mathcal{D}^\circ,
\end{equation}
where $\tau_Q:TQ\to Q$ is the canonical projection and $\prtr:TQ\times \RR\to TQ$ denotes the projection on the first component.

\begin{theorem}
Let $\sode_{L,\mathcal{D}}$ be a vector field on $TQ\times \RR$ such that
\begin{subequations}
\begin{flalign}
  &\flat_L (\sode_{L,\mathcal{D}})- \dd E_L + \left(E_L + \Reeb_L(E_L)  \right) \eta_L \in \mathcal{D}^{\ell^\circ},\\
 &\Im \sode_{L,\mathcal{D}}|_{\mathcal D\times \RR} \subset {T} \left( \mathcal D\times \RR \right).
\end{flalign}
\label{eqs_sode_constrained}
\end{subequations}
Then:
\begin{enumerate}
\item $\sode_{L,\mathcal{D}}$ is a SODE,
\item the integral curves of $\sode_{L,\mathcal{D}}$ are solutions of the contrained Herglotz equations \eqref{Herglotz_eq_constrained}.
\end{enumerate}
\end{theorem}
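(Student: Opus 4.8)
The plan is to reduce the constrained problem to the unconstrained one by isolating the reaction term. The first relation of \eqref{eqs_sode_constrained} says that $\flat_L(\sode_{L,\mathcal{D}})$ differs from $\flat_L(\sode_L) = \dd E_L - (E_L + \Reeb_L(E_L))\eta_L$ (cf.~\eqref{SODE_energy}) by some $\lambda \in \mathcal{D}^{\ell^\circ}$, so applying $\sharp_L = \flat_L^{-1}$ gives $\sode_{L,\mathcal{D}} = \sode_L + \sharp_L(\lambda)$. The decisive structural fact is that, since $\mathcal{D}^{\ell^\circ} = (\tau_Q\circ\prtr)^*\mathcal{D}^\circ$, the form $\lambda$ is \emph{semibasic}: on a regular region, where $\mathcal{D}^\circ$ admits a smooth frame $\{\Phi^a\}$, one has $\lambda = \lambda_a\Phi^a_i(q)\,\dd q^i$ for functions $\lambda_a$ on $TQ\times\RR$, with no $\dd\dot q^i$ or $\dd z$ components.

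First I would prove claim (i). Writing a general vector field $Z$ in the natural coordinates and computing $\flat_L(Z) = \contr{Z}\dd\eta_L + \eta_L(Z)\eta_L$ from $\dd\eta_L = -\dd(\partial L/\partial\dot q^i)\wedge\dd q^i$, one reads off that the $\dd\dot q^j$-component of $\flat_L(Z)$ is $W_{ji}Z^i_q$, where $Z^i_q$ is the $\partial/\partial q^i$-component of $Z$. Because $\lambda$ has no $\dd\dot q$-component and $W_{ij}$ is invertible (regularity of $L$), the vector $\sharp_L(\lambda)$ must have vanishing $\partial/\partial q^i$-component; inspecting the $\dd z$- and $\dd q^i$-components then pins it down to $\sharp_L(\lambda) = -W^{ij}\lambda_a\Phi^a_j\,\partial/\partial\dot q^i$, i.e.\ it is $\partial/\partial\dot q$-vertical. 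Since $S$ annihilates vertical fields, $S(\sode_{L,\mathcal{D}}) = S(\sode_L) + S(\sharp_L(\lambda)) = S(\sode_L) = \Delta$, so $\sode_{L,\mathcal{D}}$ is a SODE.

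For claim (ii), the SODE property lets me write each integral curve as $(\sigma(t),\dot\sigma(t),z(t))$. The second relation of \eqref{eqs_sode_constrained}, namely the tangency $\Im\sode_{L,\mathcal{D}}|_{\mathcal{D}\times\RR}\subset T(\mathcal{D}\times\RR)$, makes $\mathcal{D}\times\RR$ flow-invariant, so a curve issuing from the constraint submanifold stays on it; as $\mathcal{D}\times\RR = \{\Phi^a_i(q)\dot q^i = 0\}$ this yields $\Phi^a(\dot\sigma(t)) = 0$. For the dynamical part, contracting the first relation with $\Reeb_L$ (using $\contr{\Reeb_L}\dd\eta_L = 0$ and $\lambda(\Reeb_L)=0$) gives $\eta_L(\sode_{L,\mathcal{D}}) = -E_L$, hence $\dot z = L$ exactly as in the free case. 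Substituting the shifted acceleration $\ddot q^i = \sode_L^i - W^{ij}\lambda_a\Phi^a_j$ into the Herglotz operator $\frac{\dd}{\dd t}\frac{\partial L}{\partial\dot q^i} - \frac{\partial L}{\partial q^i} - \frac{\partial L}{\partial\dot q^i}\frac{\partial L}{\partial z}$, the unconstrained part vanishes by the equivalence between \eqref{SODE_energy} and the Herglotz equations \eqref{Herglotz_eqs}, while the correction collapses (via $W_{ji}W^{jk} = \delta^k_i$) to $\lambda_a\Phi^a_i$ up to the sign convention for the multipliers, reproducing \eqref{Herglotz_eq_constrained}.

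The main obstacle is the linear-algebra step in (i): establishing carefully that $\sharp_L$ carries a semibasic one-form to a $\partial/\partial\dot q$-vertical vector field. This is exactly where regularity of $L$ (invertibility of $W_{ij}$) and the precise shape of $\dd\eta_L$ are used, and it is what forces the constraint reaction to enter only through the accelerations, leaving $\dot z = L$ untouched. A secondary point to state cleanly is that the entire argument is local and takes place on a connected regular component $R_c$ of the generalized distribution, where a smooth frame $\{\Phi^a\}$ of $\mathcal{D}^\circ$ exists; the theorem describes the dynamics on such a component, away from the singular set.
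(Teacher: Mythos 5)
Your proof is correct and takes essentially the same route as the paper: the paper's own proof consists only of noting that differentiability of the generalized codistribution $\mathcal{D}^\circ$ yields a local spanning family of $1$-forms $\{\psi^1,\dots,\psi^m\}$ and then deferring to the proof of Theorem 6 of Ref.~\cite{deLeon2021h} for regular constraint distributions --- and that deferred argument is precisely what you reconstruct (the reaction term $\sharp_L(\lambda)$ is $\partial/\partial\dot q^i$-vertical by invertibility of $W_{ij}$, so $S(\sode_{L,\mathcal{D}})=\Delta$; tangency gives $\Phi^a(\dot\sigma)=0$; and substitution into the Herglotz operator produces the multiplier term). Your closing remark that the argument is local and lives on a regular component where a smooth frame of $\mathcal{D}^\circ$ exists is exactly the one new point the paper's two-line proof makes.
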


\begin{proof}
  Since $\mathcal D^\circ$ is a differentiable codistribution, for each $q\in Q$ there exists a local neighbourhood such that
  $\mathcal D_{\mid U}=\spn \left\{\psi^1, \ldots, \psi^m  \right\}$ for some 1-forms $\psi^i$ on $Q$, where $m$ is the local maximum of $\rho$ at $U$.
  The rest of the proof is identical to the one of Theorem 6 from Ref.~\cite{deLeon2021h}.
\end{proof}

Let $\mathcal S$ be the generalized distribution on $TQ\times \RR$ defined by $\sharp_L (\mathcal D^{\ell^\circ})$, where $\sharp_L = \flat_L^{-1}$. Let $Y_a$ be the local vector fields on $TQ\times \RR$ given by
\begin{equation}
  \flat_L (Y_a) = \tilde \psi^a,
\end{equation}
 where $\tilde \psi^a = \psi^a_i \dd q^i$ are 1-forms on $TQ\times \RR$.
Clearly, $\mathcal S_q$ is generated by $\left\{Y_a|_q  \right\}$.
We have that \cite{deLeon2021h}
\begin{equation}
  Y_a = -W^{ij} \psi^a_j \frac{\partial  } {\partial \dot q^i}.
\end{equation}
%
Consider the condition
\begin{equation}
  \mathcal{S} \cap T \left(\mathcal D\times \RR  \right) = \left\{0  \right\}. \label{condition_uniqueness}
\end{equation}

Let $X=X^b Z_b$ be a vector field tangent to $\mathcal S$. Then $X\in T\left(\mathcal D\times \RR\right)$ if and only if
\begin{equation}
  0 = \dd \bar \psi^a(X) = \psi^a_i (X^b Z_b)^i = -\psi^a_i W^{jk} \psi^b_k X^b,
\end{equation}
where $\bar \psi^b=\psi^b_i(q)\dot q^i\ (b=1,\ldots, r)$  are functions on $Q\times \RR$.
Consider the matrix
\begin{equation}
  \left(\mathcal{C}_{ab}\right) = -\left( W^{ij} \psi^a_i \psi^b_j  \right). 
\end{equation}

For each $(v_q, z)\in \mathcal D\times \RR$, we have that $\dim \mathcal S_{(v_q, z)}=\rho(q)$ and $\dim T_{(v_q,z)}(\mathcal D \times \RR)=2n+1-\rho (q)$. 
Therefore, if the condition \eqref{condition_uniqueness} holds, we have that
\begin{equation}
  \mathcal S \oplus T (\mathcal D \times \RR) = T_{\mathcal D \times \RR} (TQ \times \RR),
\end{equation}
where $T_{\mathcal D \times \RR} (TQ \times \RR)$ consists of the tangent vectors of $TQ\times \RR$ at points of $\mathcal D \times \RR$.
 Then it is natural to introduce the following projectors:
\begin{subequations}
\begin{flalign}
&\hat{\mathcal{P}}: T_{\mathcal{D} \times \RR}(T Q \times \RR) \rightarrow T(\mathcal{D} \times \RR), \\
&\hat{\mathcal{Q}}: T_{\mathcal{D} \times \RR}(T Q \times \RR) \rightarrow \mathcal{S}.
\end{flalign}
\end{subequations}
Let $X= \hat{\mathcal{P}}(\sode_L|_{\mathcal D \times \RR})$. By construction, $\Im X\subset T(\mathcal D\times \RR)$. On the other hand, at the points in $\mathcal D \times \RR$, we have that
\begin{equation}
\begin{aligned}
  b_{L}(X)-\mathrm{d} E_{L}+\left(E_{L}+\mathcal{R}_{L}\left(E_{L}\right)\right) \eta_{L} 
  &=b_{L}\left(\sode_{L}-\hat{\mathcal{Q}}\left(\sode_{L}\right)\right)-\mathrm{d} E_{L}+\left(E_{L}+\mathcal{R}_{L}\left(E_{L}\right)\right) \eta_{L} \\
  &=-b_{L}\left(\hat{\mathcal{Q}}\left(\sode_{L}\right)\right) \in \mathcal{D}^{\ell^{\circ}},
\end{aligned}
\end{equation} 
so $X= \sode_{L,\mathcal D}$.

We shall now compute an explicit expression of $\sode_{L, \mathcal{D}}$. Let $Y$ be a vector field on $TQ \times \RR$. Then, choosing a local basis $\left\{\beta_{i}\right\}$ of $T(\mathcal{D} \times\RR)$ we may write the restriction of $Y$ to $\mathcal{D} \times \RR$ as 
\begin{equation}
  Y|_{\mathcal{D} \times \RR}=Y^{i} \beta_{i}+\lambda^{a} Z_{a},
\end{equation}
so
\begin{equation}
  \dd \bar \psi^b (Y) = \lambda^a \mathcal{C}_{ba},
\end{equation}
and hence the coefficients $\lambda^a$ are given by
\begin{equation}
  \lambda^a = \mathcal{C}^{ba} \dd \bar \psi^b(Y),
\end{equation}
where $(\mathcal C^{ab}) = (\mathcal C_{ab})^{-1}$. Thus
\begin{equation}
\begin{aligned}
  &\hat{\mathcal{Q}}\left(Y_{\mid \mathcal{D} \times \RR}\right)=\mathcal C^{b a} \mathrm{~d} \bar{\psi}^{b}(Y) Z_{a},\\  
  &\hat{\mathcal{P}}\left(Y_{\mid \mathcal{D} \times \RR}\right)=Y_{\mid \mathcal{D} \times \RR}-\mathcal C^{b a} \mathrm{~d} \bar{\psi}^{b}(Y) Z_{a}.
\end{aligned}
\end{equation}
We then obtain the following result.
\begin{proposition}
If $\sode_L$ is the Hamiltonian vector field associated  with the energy $E_L$, then 
\begin{equation}
  \sode_{L, \mathcal{D}}=\sode_{L}|_{\mathcal{D} \times \RR}-\mathcal C^{b a} \mathrm{~d} \bar{\psi}^{b}\left(\sode_{L}\right) Z_{a}.
\end{equation}
\end{proposition}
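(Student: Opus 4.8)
The plan is to avoid any fresh computation and instead read the formula off the projector machinery assembled immediately above. The conditions \eqref{eqs_sode_constrained} characterize $\sode_{L,\mathcal{D}}$ uniquely, and the preceding discussion already identified this unique field as the tangential projection of the free dynamics, $\sode_{L,\mathcal{D}} = \hat{\mathcal{P}}(\sode_L|_{\mathcal{D}\times\RR})$. Hence the whole task reduces to substituting $Y = \sode_L$ into the explicit expression for $\hat{\mathcal{P}}$.

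First I would make sure the projectors are well-defined. Under the transversality hypothesis \eqref{condition_uniqueness} one has $\mathcal{S}\cap T(\mathcal{D}\times\RR) = \{0\}$, and since $\dim\mathcal{S}_{(v_q,z)} = \rho(q)$ while $\dim T_{(v_q,z)}(\mathcal{D}\times\RR) = 2n+1-\rho(q)$, the dimensions are complementary, giving the splitting $\mathcal{S}\oplus T(\mathcal{D}\times\RR) = T_{\mathcal{D}\times\RR}(TQ\times\RR)$. I would record that this transversality is equivalent to the invertibility of $(\mathcal{C}_{ab}) = -(W^{ij}\psi^a_i\psi^b_j)$, so that $(\mathcal{C}^{ab}) = (\mathcal{C}_{ab})^{-1}$ is legitimate; the coefficients of the projection are built from $\mathcal{C}^{ba}$.

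Next I would extract the $\mathcal{S}$-component of an arbitrary field $Y$. Writing $Y|_{\mathcal{D}\times\RR} = Y^i\beta_i + \lambda^a Z_a$ for a local frame $\{\beta_i\}$ of $T(\mathcal{D}\times\RR)$, the crucial remark is that the constraint functions $\bar\psi^b$ cut out $\mathcal{D}\times\RR$, so each $\dd\bar\psi^b$ annihilates the $\beta_i$; therefore $\dd\bar\psi^b(Y) = \lambda^a\,\dd\bar\psi^b(Z_a) = \lambda^a\mathcal{C}_{ba}$, and inverting $(\mathcal{C}_{ab})$ yields $\lambda^a = \mathcal{C}^{ba}\,\dd\bar\psi^b(Y)$. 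This gives $\hat{\mathcal{Q}}(Y|_{\mathcal{D}\times\RR}) = \mathcal{C}^{ba}\,\dd\bar\psi^b(Y)\,Z_a$ and hence $\hat{\mathcal{P}}(Y|_{\mathcal{D}\times\RR}) = Y|_{\mathcal{D}\times\RR} - \mathcal{C}^{ba}\,\dd\bar\psi^b(Y)\,Z_a$. Setting $Y = \sode_L$ and using $\sode_{L,\mathcal{D}} = \hat{\mathcal{P}}(\sode_L|_{\mathcal{D}\times\RR})$ then delivers the stated identity.

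The one step I would treat with care---the genuine content rather than bookkeeping---is the identification $\sode_{L,\mathcal{D}} = \hat{\mathcal{P}}(\sode_L|_{\mathcal{D}\times\RR})$. Here one checks that $\hat{\mathcal{P}}(\sode_L|_{\mathcal{D}\times\RR})$ satisfies both defining conditions: its image lies in $T(\mathcal{D}\times\RR)$ by construction of $\hat{\mathcal{P}}$, and the discarded part contributes $\flat_L(\hat{\mathcal{Q}}(\sode_L)) = \mathcal{C}^{ba}\,\dd\bar\psi^b(\sode_L)\,\tilde\psi^a$, which lies in $\mathcal{D}^{\ell^\circ}$ because $\hat{\mathcal{Q}}$ takes values in $\mathcal{S} = \sharp_L(\mathcal{D}^{\ell^\circ})$ and $\flat_L = \sharp_L^{-1}$. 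Uniqueness from the characterization \eqref{eqs_sode_constrained} then forces the equality, and the formula follows.
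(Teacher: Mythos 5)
Your proposal is correct and is essentially the paper's own argument: the paper establishes this proposition by exactly the projector construction developed just before it, identifying $\sode_{L,\mathcal{D}}$ with $\hat{\mathcal{P}}(\sode_L|_{\mathcal{D}\times\RR})$ through the two defining conditions (image tangent to $\mathcal{D}\times\RR$, discarded part $-\flat_L(\hat{\mathcal{Q}}(\sode_L))\in\mathcal{D}^{\ell^\circ}$) and then substituting $Y=\sode_L$ into $\hat{\mathcal{P}}(Y|_{\mathcal{D}\times\RR})=Y|_{\mathcal{D}\times\RR}-\mathcal{C}^{ba}\,\dd\bar{\psi}^{b}(Y)\,Z_{a}$. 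The only difference is that you make explicit details the paper leaves implicit (invertibility of $(\mathcal{C}_{ab})$ via the transversality condition, and why $\dd\bar{\psi}^{b}$ annihilates the frame $\{\beta_i\}$ tangent to $\mathcal{D}\times\RR$), which is harmless.
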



Let us consider now a contact Lagrangian system with mechanical Lagrangian $L(q, \dot q, z)=\frac{1}{2}g(\dot q, \dot q)-V(q,z)$. 
Let us introduce the following vector subspaces of $T^*_{q(t_0)}Q$:
\begin{equation}
\begin{aligned}
  & \mathcal D_{q(t_0)}^{\circ -} 
  \coloneqq \left\{\alpha \in T^*_{q(t_0)}Q \mid 
  \exists \tilde \alpha: (t_0-\varepsilon, t_0)\to T^*Q \text{ such that }
  \tilde \alpha(t)\in \mathcal D_{q(t)} 
  \text{ and } \lim_{t\to t_0^-} \tilde \alpha(t) = \alpha
     \right\},\\
  & \mathcal D_{q(t_0)}^{\circ +}
  \coloneqq \left\{\alpha \in T^*_{q(t_0)}Q \mid 
  \exists \tilde \alpha: (t_0, t_0+\varepsilon)\to T^*Q \text{ such that }
  \tilde \alpha(t)\in \mathcal D_{q(t)} 
  \text{ and } \lim_{t\to t_0^+} \tilde \alpha(t) = \alpha
     \right\}.  
\end{aligned}
\end{equation}
We have that
\begin{equation}
  \left(\mathcal D_{q(t_0)}^{\circ \pm}  \right)^\perp
  = \lim_{t\to t_0^\pm} \left(\mathcal D_{q(t)^\circ}  \right)^\perp,
\end{equation}
where the superscript $\perp$ denotes the orthogonal complement with respect to the bilinear form induced by the metric $g$ on $T_{q(t_0)}^*Q$, and the limits $(D^\perp)^\pm$ are defined as in the case of $D^\pm$. In what follows, we shall also denote by $g$ the bilinear form induced by the metric on $T_{q(t_0)}^*Q$. In each case, the meaning should be clear by the context.

From Eq.~\eqref{constrained_Herglotz_form}, we have that
\begin{equation}
   \left(\frac{\mathrm{d}p_i} {\mathrm{d}t} - \frac{\partial L} {\partial q^i} - \frac{\partial L} {\partial \dot q^i} \frac{\partial L} {\partial z}  \right)\ 
   \dd q^i
   \in \mathcal D_{q(t)}^\circ ,
\end{equation}
so
\begin{equation}
 \lim_{t\to t_0^+} \int_{t_0}^t  \left(\frac{\mathrm{d}p_i} {\mathrm{d}t} - \frac{\partial L} {\partial q^i} - \frac{\partial L} {\partial \dot q^i} \frac{\partial L} {\partial z}  \right)\ 
   \dd q^i\ \dd t
   = \left[p_i(t_0^+)-p_i(t_{0})  \right] \dd q^i
   \in \lim_{t\to t_0^+} \mathcal D_{q(t)}^\circ
   = \mathcal D_{q(t)}^{\circ +}.
\end{equation}
Additionally, the `post-impact' momentum $p(t_0^+)$ must satisfy the new constraints imposed by $\mathcal D_{q(t_0)}^{\circ +}$. Therefore, the change of momentum is determined by the following equations:
\begin{subequations}
\begin{flalign}
    & \left[p_i(t_0^+)-p_i(t_{0})  \right] \dd q^i
   \in \mathcal D_{q(t)}^{\circ +},
   \label{change_momentum_distribution}
   \\ 
   & p_i(t_0^+)\ \dd q^i \in \left(\mathcal D_{q(t)}^{\circ +}  \right)^\perp.
\end{flalign}
\end{subequations}
A momentum jump occurs if the `pre-impact' momentum does not satisfy the constraints imposed by $\mathcal D_{q(t)}^{\circ +}$, that is,
\begin{equation}
   p_i(t_0^-)\ \dd q^i \notin \left(\mathcal D_{q(t)}^{\circ +}  \right)^\perp.
\end{equation}

Let $m=\max\{\rho_-, \rho_+\}$. Then, there exists a neighbourhood $U$ of $q(t_0)$ and 1-forms $\psi^1, \ldots, \psi^m$ such that $\mathcal D_q = \spn \left\{\psi^i(q)  \right\}_{i=1}^m$ for any $q\in Q$.
 Let us suppose that $\psi^1, \ldots, \psi^{\rho_+}$ are linearly independent at the regular posterior points. Obviously, these 1-forms are linearly dependent at $q(t_0)$. In order to simplify the notation, hereinafter let $\psi^a=\psi^a(q(t))$. We have that
 \begin{equation}
   \psi^a_i\dot q^i(t)
   =\psi^a_{i} g^{ij} p_j(t) = 0
   \label{constraint_local}
 \end{equation}
 for $a=1, \ldots, \rho_+$. The metric $g$ induces the decomposition $T_q^*Q= \mathcal D_q^\circ \oplus \mathcal D_q^{\circ \perp}$, with the projectors
 \begin{equation}
 \begin{aligned}
   &\mathcal P_q : T_q^*Q \to \mathcal D_q^{\circ \perp},\\
   & \mathcal Q_q : T_q^*Q \to \mathcal D_q^{\circ}.
 \end{aligned}
 \end{equation}
 Consider the matrix $(\mathcal C^{ab}) = (\psi^a_i g^{ij} \psi^b_j)$, in other words, $\mathcal C= \psi g^{-1} \psi^T$, and let $(\mathcal C_{ab})$ denote its inverse matrix. The projector $\mathcal P_q$ is given by
 \begin{equation}
   \mathcal P_q(\alpha) 
   = \alpha -\mathcal C_{ab} g^{ij} \psi^a_i \alpha_j \psi^b,
 \end{equation}
 for $\alpha=\alpha_i \dd q^i \in T_q^*Q$.
 From Eq.~\eqref{constraint_local}, we have that $\mathcal P_{q(t)}\left(p_i(t)\dd q^i_{\mid q(t)}\right)=p_i(t)\dd q^i_{\mid q(t)} $, so
 \begin{equation}
   p_i(t_0^+) \dd q^i_{\mid q(t_0^+)}
   = \lim_{t\to t_0^+} p_i(t) \dd q^i_{\mid q(t)}
   = \lim_{t\to t_0^+} \mathcal P_{q(t)}\left(p_i(t_0^+)\dd q^i_{\mid q(t)}\right)
   \in \left( \mathcal D_{q(t_0)}^{\circ +}  \right)^\perp.
 \end{equation}
Now, Eq.~\eqref{change_momentum_distribution} implies that
\begin{equation}
   \lim_{t\to t_0^+} \mathcal P_{q(t)}\left( \left(p_i(t_0+) - p_i(t_0)  \right)\dd q^i_{\mid q(t)}\right) = 0,
\end{equation}
and hence the change of momentum is given by
\begin{equation}
  p_i(t_0^+) \dd q^i_{\mid q(t_0)}
  =  \lim_{t\to t_0^+} \mathcal P_{q(t)} \left(  p_i(t_0^-) \dd q^i_{\mid q(t_0)} \right).
  \label{eq_change_momentum_P_matrix}
\end{equation}
Locally, this can be expressed as
\begin{equation}
  p_i(t_0^+) = p_i(t_0^-) - \lim_{t\to t_0^+} \sum_{a,b,j,k}  
  \left.\left(\mathcal C_{ab} \psi^a_j g^{jk} \psi^b_i\right)\right|_{q(t)} p_k(t_0^-),
\end{equation}
or, in matrix form,
\begin{equation}
  p(t_0^+)= \left[\operatorname{Id}- \lim_{t\to t_0^+} \left. \left(\psi^T \mathcal C^{-1} \psi g^{-1}  \right)\right|_{q(t)}  \right] p(t_0^-).
  \label{eq_change_momentum_C_matrix}
\end{equation}

\begin{example}[Rolling cylinder on a spring {plane} with an external force]
Consider the system from Subsection \ref{cylinder_Carnot}. From the constraints \eqref{constraints}, we obtain the codistribution
\begin{equation}
  \mathcal D_{(x, y, \phi, h)}^\circ=\left\{\begin{array}{ll}
  \{0\}, & \text { if } y-h-\gamma \cos \phi>0, \\
  \operatorname{span}\left\{\dd x-(r+\gamma\cos\phi)\dd \phi, \dd y-\dd h+(\gamma\sin\phi)\dd \phi\right\}, & \text { if } y-h-\gamma \cos \phi=0.
\end{array}\right.
\end{equation}

The instantaneous change of momentum that occurs when the cylinder transitions from the aerial phase to the stance phase can be computed via Eq.~\eqref{eq_change_momentum_P_matrix}. The constraints can be expressed in matrix form as
\begin{equation}
    \phi = \left(
     \begin{array}{cccc}
      1 & 0 & -\gamma  \cos (\phi )-r & 0 \\
      0 & 1 & -\gamma  \sin (\phi ) & -1 \\
     \end{array}
    \right)
\end{equation}
so the matrix $\mathcal C$ is given by
\begin{align*}
    \mathcal C &=\phi g^{-1} \phi^T 
    = \left(
    \begin{array}{cc}
     \frac{(-\gamma  \cos (\phi )-r)^2}{I}+\frac{1}{m} & -\frac{\gamma  \sin (\phi ) (-\gamma  \cos (\phi )-r)}{I} \\
     -\frac{\gamma  \sin (\phi ) (-\gamma  \cos (\phi )-r)}{I} & \frac{\gamma ^2 \sin ^2(\phi )}{I}+\frac{1}{m}+\frac{1}{M} \\
    \end{array}
    \right)
\end{align*}
and the projector $\mathcal P$ is
{
\begin{align*}
   & \mathcal P
    = \operatorname{Id}_4-\phi^T \mathcal C^{-1} \phi g^{-1} \\ 
\end{align*}}
Hence, 
\begin{align*}
    p_x^+&=\frac{2 m (\gamma  \cos (\phi )+r) [\gamma  \sin (\phi ) (m p_h^--M p_y^-)+(m+M) (\gamma  p_x^- \cos (\phi )+p_x^- r+p_\phi^-)]}{2 (m+M) \left(I+m r^2\right)+\gamma ^2 m (m+2 M)+\gamma  m (\gamma  m \cos (2 \phi )+4 r (m+M) \cos (\phi ))},\\
    p_y^+&=\frac{m}{(m+M) \left(I+m r^2+\gamma  m \cos (\phi ) (\gamma  \cos (\phi )+2 r)\right)+\gamma ^2 m M \sin ^2(\phi )}\\
    &\times\left[(p_h^-+p_y^-) \left(I+m r^2+\gamma  m \cos (\phi ) (\gamma  \cos (\phi )+2 r)\right)
    \right.\\&\left.
    -\gamma  M \sin (\phi ) (\gamma  p_x^- \cos (\phi )+p_x^- r+p_\phi^-)+\gamma ^2 M p_y^- \sin ^2(\phi )\right],\\
    p_\phi^+&=\frac{2 I (\gamma  \sin (\phi ) (m p_h^--M p_y^-)+(m+M) (\gamma  p_x^- \cos (\phi )+p_x^- r+p_\phi^-))}{2 (m+M) \left(I+m r^2\right)+\gamma ^2 m (m+2 M)+\gamma  m (\gamma  m \cos (2 \phi )+4 r (m+M) \cos (\phi ))},\\
    p_h^+&=\frac{M}{2 (m+M) \left(I+m r^2\right)+\gamma ^2 m (m+2 M)+\gamma  m (\gamma  m \cos (2 \phi )+4 r (m+M) \cos (\phi ))}\\
    &\times \left[2 (p_h^-+p_y^-) \left(I+m r^2\right)+\gamma  m (4 r (p_h^-+p_y^-) \cos (\phi )
     \right.\\&\left.
     +2 \sin (\phi ) (\gamma  p_x^- \cos (\phi )+p_x^- r+p_\phi^-)+\gamma  p_y^- \cos (2 \phi ))+\gamma ^2 m (2 p_h^-+p_y^-)\right].
\end{align*}
Making use of the relations
\begin{equation}
    p_x^\pm =  m \dot x^\pm,\quad
    p_y^\pm =  m \dot y^\pm,\quad
    p_\phi^\pm =  I\dot \phi^\pm,\quad
    p_h^\pm =  M \dot h^\pm
\end{equation}
one can obtain the instantaneous change of velocity.
\end{example}

\begin{example}[The rolling sphere with dissipation]
Consider a homogeneous sphere rolling on a {plane}. The configuration space is $Q=\mathbb{R}^{2} \times S O(3)$ (see \cite{Cortes2001b} for the symplectic counterpart of this example). Let $(x, y)$ denote the position of the centre of the sphere and let $(\varphi, \theta, \psi)$ denote the Eulerian angles.

Assume that the {plane} is smooth if $x<0$ and absolutely rough if $x>0$. On the smooth part, the motion of the ball is free, whereas when it reaches the rough half-{plane}, the sphere rolls without slipping. Let us suppose that the motion of the sphere, both on the smooth and rough half-{plane}s, has a dissipation linear in the velocities. The contact Lagrangian of the system is
\begin{equation}
  L = \frac{1}{2}\left[\dot{x}^{2}+\dot{y}^{2}+k^{2}\left(\omega_{x}^{2}+\omega_{y}^{2}+\omega_{z}^{2}\right)\right] - \beta Z,
\end{equation}
where $\omega_{x}, \omega_{y}$ and $\omega_{z}$ are the angular velocities with respect to the inertial frame, given by
\begin{equation}
\begin{aligned}
  &\omega_{x}=\dot{\theta} \cos \psi+\dot{\varphi} \sin \theta \sin \psi, \\
  &\omega_{y}=\dot{\theta} \sin \psi-\dot{\varphi} \sin \theta \cos \psi, \\
  &\omega_{z}=\dot{\varphi} \cos \theta+\dot{\psi},
\end{aligned}
\end{equation}
$Z$ is the `contact variable' and $\beta$ is a positive constant.

The condition of rolling without sliding is given by
\begin{equation}
\begin{array}{l}
  \phi^{1}=\dot{x}-r \omega_{y}=0, \\
  \phi^{2}=\dot{y}+r \omega_{x}=0.
\end{array}
\end{equation}
Let us introduce the quasi-coordinates $q^1,\ q^2$ and $q^3$ such that $\dot q^1 = \omega_x,\ \dot q^2= \omega_2$ and $\dot q^3= \omega_3$. The generalized distribution $\mathcal D^\circ$
characterizing the constraints has annihilator
\begin{equation}
  \mathcal D_{(x, y, \phi, \theta, \psi)}^\circ=\left\{\begin{array}{ll}
  \{0\}, & \text { if } x \leqslant 0 \\
  \operatorname{span}\left\{\mathrm{d} x-r \mathrm{~d} q^{2}, \mathrm{~d} y+r \mathrm{~d} q^{1}\right\}, & \text { if } x>0
\end{array}\right.
\end{equation}
The set of regular points of the distribution has two connected components, namely,
\begin{equation}
\begin{aligned}
   &R_1 = \left\{(x, y, \varphi, \theta, \psi )\in Q\mid x<0  \right\},\\  
   &R_2 = \left\{(x, y, \varphi, \theta, \psi )\in Q\mid x>0  \right\},
\end{aligned}
\end{equation}
while the line $\left\{x=0  \right\}$ belongs to the singular set of $\mathcal D$.
On $R_1$ the equations of motion are
\begin{equation}
\begin{aligned}
  &  \ddot x + \beta \dot x  = 0,\\
  &  \ddot y + \beta \dot y  = 0,\\
  & k^2 \dot \omega_a + \beta \omega_a = 0, \quad a=x,y,z.
\end{aligned}
\label{eqs_sphere_R1}
\end{equation}
On $R_2$ the equations of motion are
\begin{equation}
\begin{aligned}
  &  \ddot x + \beta \dot x  = \lambda_1,\\
  &  \ddot y + \beta \dot y  = \lambda_2,\\
  & k^2 \dot \omega_x + k^2 \beta \omega_x = r\lambda_2,\\
  & k^2 \dot \omega_y + k^2 \beta \omega_y = -r\lambda_1,\\
  & k^2 \dot \omega_z + k^2\beta \omega_z = 0,\\
  & \dot x - r\omega_y = 0,\\
  & \dot y+ r \omega_x = 0.
\end{aligned}
\label{eqs_sphere_R2}
\end{equation}

Assume that the sphere starts its motion at some point in $R_1$ with positive velocity in the $x$-direction, namely, $x(0)=x_0,\, y(0)=y_0,\, \omega_a(0)=(\omega_a)_0\, (a=x,y,z)$ such that $x_0<0$ and $\dot x_0>0$. Integrating Eqs.~\eqref{eqs_sphere_R1} yields
\begin{equation}
\begin{aligned}
  &  x(t)=\frac{\dot x_0}{\beta } \left(1-e^{-\beta  t}\right)+ x_0,\\
  &  y(t)=\frac{\dot y_0}{\beta }\left(1-e^{-\beta  t}\right)  +y_0,\\
  & \omega_a(t)=e^{-\frac{\beta  t}{k^2}}(\omega_a)_0, \quad a=x,y,z,
\end{aligned}
\label{eqs_sphere_R1_integrated}
\end{equation}
for $x(t)<0$. At time $\bar{t}=-x_0/\dot x_0$ the sphere reaches the rough {surface} of the plane, where the codistribution $\mathcal D^\circ$ is no longer zero, so the sphere is forced to roll without sliding.

We can compute the instantaneous change of momentum by means of Eq.~\eqref{eq_change_momentum_P_matrix}. 
The constraints can be expressed in matrix form as
\begin{equation}
    \phi = (\phi_i^a) =  \begin{pmatrix}
        1 & 0 & 0 & -r & 0\\
        0 & 1 & r & 0 & 0
    \end{pmatrix}, 
\end{equation}
so the matrix $\mathcal C$ is given by
\begin{align*}
    \mathcal C &=\phi g^{-1} \phi^T 
    = \begin{pmatrix}
        1+\frac{r^2}{k^2} & \\
      &1+\frac{r^2}{k^2},
    \end{pmatrix}
    =\frac{k^2+r^2}{k^2} \mathrm{Id}_2
\end{align*}
and the projector $\mathcal P$ is
\begin{align*}
    \mathcal P
    &= \operatorname{Id}_5-\phi^T \mathcal C^{-1} \phi g^{-1}  
    =\left(
    \begin{array}{ccccc}
     \frac{r^2}{k^2+r^2} & 0 & 0 & \frac{r}{k^2+r^2} & 0 \\
     0 & \frac{r^2}{k^2+r^2} & -\frac{r}{k^2+r^2} & 0 & 0 \\
     0 & -\frac{k^2 r}{k^2+r^2} & \frac{k^2}{k^2+r^2} & 0 & 0 \\
     \frac{k^2 r}{k^2+r^2} & 0 & 0 & \frac{k^2}{k^2+r^2} & 0 \\
     0 & 0 & 0 & 0 & 1 \\
    \end{array}
    \right).
\end{align*}
Hence, 
\begin{align*}
    &\left(p_{x}\right)_{+}=\frac{r^{2}\left(p_{x}\right)_{-}+r\left(p_{2}\right)_{-}}{r^{2}+k^{2}},\\
    &\left(p_{y}\right)_{+}=\frac{r^{2}\left(p_{y}\right)_{-}-r\left(p_{1}\right)_{-}}{r^{2}+k^{2}},\\
    &\left(p_{1}\right)_{+}=\frac{-r k^{2}\left(p_{y}\right)_{-}+k^{2}\left(p_{1}\right)_{-}}{r^{2}+k^{2}},\\
    &\left(p_{2}\right)_{+}=\frac{r k^{2}\left(p_{x}\right)_{-}+k^{2}\left(p_{2}\right)_{-}}{r^{2}+k^{2}},\\
    &\left(p_{3}\right)_{+}=\left(p_{3}\right)_{-}.
\end{align*}
We can introduce the quasi-coordinates $q^1, q^2$ and $q^3$ such that $\dot q^1= \omega_x,\, \dot q^2= \omega_y$ and $\dot q^3= \omega_z$. Then
\begin{equation}
\begin{array}{lll}
    p_{x}=\frac{\partial L}{\partial \dot x}=\dot{x}, 
    &p_{y}=\frac{\partial L}{\partial \dot y}=\dot{y}, \\ 
    p_{1}=\frac{\partial L}{\partial \dot q^1}=k^{2} \omega_{x}, 
    &p_{2}=\frac{\partial L}{\partial \dot q^2}=k^{2} \omega_{y}, 
    &p_{3}=\frac{\partial L}{\partial \dot q^3}=k^{2} \omega_{z},
\end{array}
\end{equation}
and thus the instantaneous change of velocity is given by
\begin{equation}
\begin{aligned}
    &\dot{x}_{+}=\frac{r^{2} \dot{x}_{-}+r k^{2}\left(\omega_{y}\right)_{-}}{r^{2}+k^{2}},\\
    &\dot{y}_{+}=\frac{r^{2} \dot{y}_{-}-r k^{2}\left(\omega_{x}\right)_{-}}{r^{2}+k^{2}},\\
    &\left(\omega_{x}\right)_{+}=\frac{-r \dot{y}_{-}+k^{2}\left(\omega_{x}\right)_{-}}{r^{2}+k^{2}},\\
    &\left(\omega_{y}\right)_{+}=\frac{r \dot{x}_{-}+k^{2}\left(\omega_{y}\right)_{-}}{r^{2}+k^{2}},\\
    & (\omega_z)_+=(\omega_z)_{-}.
    \label{change_velocities_sphere}
\end{aligned}
\end{equation}
We now integrate Eqs.~\eqref{eqs_sphere_R2} with the initial conditions $x(\bar{t})=x_1,\, y(\bar{t})=y_1,\, \dot x(\bar{t})=\dot x_+,\, \dot y(\bar{t})= \dot y_+,\, \dot z(\bar{t})=\dot z_+,\, \omega_{a}(\bar{t})=(\omega_a)_+\, (a=x,y,z)$.
The equations of motion in $R_2$ obtained are
\begin{align}
    x(t)&=\frac{e^{-\beta t}}{\beta ^2 \left(k^2+r^2\right)}
    \left[e^{\beta  t} \left(\lambda_1  \left(k^2+r^2\right) (\beta  t-\beta  \bar{t}-1)+\beta  k^2 (r (\omega_y)_-+\beta  x_1)
    \right.\right.\\&\left.\left.
    +\beta  r^2 (\dot x_-+\beta  x_1)\right)+e^{\beta  \bar{t}} \left(k^2 (\lambda_1 -\beta  r (\omega_y)_-)+r^2 (\lambda_1 -\beta  \dot x_-)\right)\right],\\
    y(t)&=\frac{e^{-\beta  t} }{\beta ^2 \left(k^2+r^2\right)}
    \left[e^{\beta  t} \left(\lambda_2  \left(k^2+r^2\right) (\beta  t-\beta  \bar{t}-1)+\beta  k^2 (\beta  y_1-r (\omega_x)_-)
    \right.\right.\\&\left.\left.
    +\beta  r^2 (\dot y_-+\beta  y_1)\right)+e^{\beta  \bar{t}} \left(k^2 (\lambda_2 +\beta  r (\omega_x)_-)+r^2 (\lambda_2 -\beta  \dot y_-)\right)\right],\\
    \omega_x(t)&=\frac{1}{\beta }{\lambda_2  r-\frac{e^{\frac{\beta  (\bar{t}-t)}{k^2}} \left(k^2 (\lambda_2  r-\beta  (\omega_x)_-)+\lambda_2  r^3+\beta  r \dot y_-\right)}{k^2+r^2}},\\
    \omega_y(t)&= \frac{1}{\beta }{\frac{e^{\frac{\beta  (\bar{t}-t)}{k^2}} \left(k^2 (\lambda_1  r+\beta  (\omega_y)_-)+\lambda_1  r^3+\beta  r \dot x_-\right)}{k^2+r^2}-\lambda_1  r},\\
    \omega_z(t)&=(\omega_z)_- e^{\frac{\beta  (\bar{t}-t)}{k^2}}.
\end{align}
The Lagrange multipliers $\lambda_1$ and $\lambda_2$ are obtained by imposing
$\dot x - r\omega_y = 0$ and  $\dot y+ r \omega_x = 0$, yielding
\begin{align*}
    \lambda_1&=\frac{\beta  r \left(k^2 (\omega_y)_-+r \dot x_-\right) \left(e^{\beta  \bar{t}}-e^{\frac{\beta  (\bar{t}-t)}{k^2}+\beta  t}\right)}{\left(k^2+r^2\right) \left(r^2 e^{\frac{\beta  (\bar{t}-t)}{k^2}+\beta  t}-r^2 e^{\beta  t}-e^{\beta  t}+e^{\beta  \bar{t}}\right)},\\
    \lambda_2&=\frac{\beta  r \left(r \dot y_--k^2 (\omega_x)_-\right) \left(e^{\beta  \bar{t}}-e^{\frac{\beta  (\bar{t}-t)}{k^2}+\beta  t}\right)}{\left(k^2+r^2\right) \left(r^2 e^{\frac{\beta  (\bar{t}-t)}{k^2}+\beta  t}-r^2 e^{\beta  t}-e^{\beta  t}+e^{\beta  \bar{t}}\right)}.
\end{align*}


\end{example}

\section{Conclusions and outlook}
\label{section_conclusions}
We introduced contact Lagrangian systems under impulsive
forces and constraints from a geometric point of view, as well as instantaneous nonholonomic constraints which are not uniform along the configuration space. 
{{This was done via Herglotz variational principle.}}
In addition, we have shown a  Carnot-type theorem for contact Lagrangian systems subject to impulsive forces and constraints and we have also shown that the vector field describing the dynamics of a contact Lagrangian system is determined by defining projectors to evaluate the constraints by using a Riemannian metric. We have illustrated the
theoretical results with two examples: a rolling cylinder on a springily {plane} and a rolling sphere on a non-uniform {plane}, both with dissipation. 

In a future work, we shall provide a variational approach for contact Lagrangian systems with impulsive constraints, extending the variational approach from \cite{Lopez-Gordon2022} (see also \cite{Fetecau2003}). 
Furthermore, we plan to study hybrid contact Hamiltonian and Lagrangian systems. Hybrid systems are an alternative formalism for describing physical system with impacts, as well as certain UAVs (unmanned aerial vehicles) systems and legged robots (for instance, see \cite{Colombo2022} and references therein).

\section*{Acknowledgments}
The authors acknowledge financial support from the Spanish Ministry of Science and Innovation (MCIN/AEI/ 10.13039/501100011033), under grant PID2019-106715GB-C21. Manuel de León and
Asier López-Gordón recieved support under ``Severo Ochoa Programme for Centres of Excellence in R\&D'' (CEX2019-000904-S), funded by MCIN/AEI/ 10.13039/501100011033. 
Manuel de León also acknowledges the grant EIN2020-112197, funded by AEI/10.13039/501100011033 and European Union NextGenerationEU/PRTR.
Asier López-Gordón would also like to thank MCIN/AEI/ 10.13039/501100011033 for the predoctoral contract PRE2020-093814.



\section*{Data availability}
Data sharing is not applicable to this article as no new data were created or analyzed in this study.

\printbibliography
\end{document}